\documentclass[11pt]{article}

\usepackage{lineno,hyperref}
\usepackage
[
letterpaper,% other options: a3paper, a5paper, etc
left=1.91cm,
right=1.91cm,
top=1.91cm,
bottom=2cm,
% use vmargin=2cm to make vertical margins equal to 2cm.
% us  hmargin=3cm to make horizontal margins equal to 3cm.
% use margin=3cm to make all margins  equal to 3cm.
]
{geometry}   
\modulolinenumbers[5]
\usepackage{amsfonts,amsmath,amstext,amssymb,amsthm}
\usepackage{graphicx}         
\usepackage{float}
\usepackage{subcaption}
\usepackage{array}
\usepackage{xcolor}
\usepackage{hyperref}
\usepackage{algorithm, algpseudocode}
\usepackage{bigints}
\usepackage[algo2e,vlined,linesnumbered]{algorithm2e}

\newcommand{\revise}[1]{{\color{black}  #1}}
\newcommand\norm[1]{\left\lVert#1\right\rVert}
\newcommand{\bmat}[1]{\begin{bmatrix}#1\end{bmatrix}}
\newcommand{\R}{{\mathbb{R}}}
\newcolumntype{M}[1]{>{\centering\arraybackslash}m{#1}}
\newtheorem{theorem}{Theorem}
\newtheorem{example}{Example}
\newtheorem{remark}{Remark}
\newtheorem{lemma}{Lemma}
\newtheorem{proposition}{Proposition}
\newtheorem{assumption}{Assumption}

\newtheorem{definition}{Definition}
%\journal{Systems \& Control Letters}

%%%%%%%%%%%%%%%%%%%%%%%
%% Elsevier bibliography styles
%%%%%%%%%%%%%%%%%%%%%%%
%% To change the style, put a % in front of the second line of the current style and
%% remove the % from the second line of the style you would like to use.
%%%%%%%%%%%%%%%%%%%%%%%

%% Numbered
%\bibliographystyle{model1-num-names}

%% Numbered without titles
%\bibliographystyle{model1a-num-names}

%% Harvard
%\bibliographystyle{model2-names.bst}\biboptions{authoryear}

%% Vancouver numbered
%\usepackage{numcompress}\bibliographystyle{model3-num-names}

%% Vancouver name/year
%\usepackage{numcompress}\bibliographystyle{model4-names}\biboptions{authoryear}

%% APA style
%\bibliographystyle{model5-names}\biboptions{authoryear}

%% AMA style
%\usepackage{numcompress}\bibliographystyle{model6-num-names}

%% `Elsevier LaTeX' style
%\bibliographystyle{elsarticle-num}
%%%%%%%%%%%%%%%%%%%%%%%
\title{\LARGE \bf Reachability Analysis Using Dissipation Inequalities for Uncertain Nonlinear Systems}

%\author[UCBMech]{He Yin}\ead{he\_yin@berkeley.edu}
%\author[UCBMech]{Andrew Packard}\ead{apackard@berkeley.edu}
%\author[UCBEECS]{Murat Arcak}\ead{arcak@berkeley.edu}    
%\author[Minn]{Peter Seiler}\ead{pseiler@umich.edu}  
%
%\affil[UCBMech]{Department of Mechanical Engineering, University of California, Berkeley} 
%\affil[UCBEECS]{Department of Electrical Engineering and Computer Sciences, University of California, Berkeley}        
%\affil[Minn]{Department of Electrical Engineering and Computer Science, University of Michigan, Ann Arbor} 

\author{
	He Yin \thanks{He Yin is a Graduate Student in the Department of 
		Mechanical Engineering at the University of California, Berkeley 
		{\tt\small he\_yin@berkeley.edu}} , 
	Andrew Packard \thanks{Andrew Packard is a Professor in the Department of
		Mechanical Engineering at the University of California, Berkeley
		{\tt\small apackard@berkeley.edu}} ,
	Murat Arcak \thanks{Murat Arcak is a Professor in the Department of Electrical Engineering
		and Computer Sciences at the University of California, Berkeley
		{\tt\small arcak@berkeley.edu}} ,
	Peter Seiler \thanks{Peter Seiler is an Associate Professor in the Department of Electrical Engineering and Computer Science, University of Michigan, Ann Arbor
		{\tt\small pseiler@umich.edu}}%
}
\date{\vspace{-5ex}}

\begin{document}
\maketitle

\begin{abstract}
Abstract --- We propose a method to outer bound forward reachable sets on finite horizons for uncertain nonlinear systems with polynomial dynamics. This method makes use of time-dependent polynomial storage functions that satisfy appropriate dissipation inequalities that account for time-varying uncertain parameters, $\mathcal{L}_2$ disturbances, and perturbations $\Delta$ characterized by integral quadratic constraints (IQCs) with both hard and soft factorizations. In fact, to our knowledge, this is the first result introducing IQCs to reachability analysis, thus allowing for various types of uncertainty, including unmodeled dynamics. The generalized S-procedure and Sum-of-Squares techniques are used to derive algorithms with the goal of finding the tightest outer bound with a desired shape. Both pedagogical and practically motivated examples are presented, including a 7-state F-18 aircraft model.
\end{abstract}

%\begin{keyword}
%Analysis of systems with uncertainties; Numerical algorithms; Reachable set estimation. 
%\end{keyword}

% \linenumbers

\section{Introduction}
The forward reachable set (FRS) is the set of all the successors to a set of initial conditions subject to the given dynamics under all possible model uncertainties and disturbances in a finite horizon. The computation of the FRS plays an important role in safety-critical systems, as it can verify whether a system is able to reach a target and avoid an obstacle \cite{Prajna:04}, \cite{Mitchell:00}. Indeed, if an outer bound avoids obstacles and is encompassed by the target set at the final time, then one can ascertain the same properties for all trajectories. In this paper, we present a method for finding the smallest achievable %(with respect to a shape function to be defined)
outer bounds to the FRSs on finite horizons, since in many practical settings, systems only undergo finite-time trajectories, such as robotic systems and space launch and re-entry vehicles.

The algorithm proposed in this paper uses a storage function that satisfies a dissipation inequality to characterize the outer bound. The dissipation inequality framework, combined with the Sum-of-Squares (SOS) technique~\cite{Antonis:02} and the generalized S-procedure \cite{Parrilo:00}, allows us to simultaneously accommodate multiple sources of uncertainty, including time varying uncertain parameters, $\mathcal{L}_2$ disturbances, and perturbations $\Delta$ whose input output properties are characterized by integral quadratic constraints (IQCs) \cite{Megretski:97}. IQCs can model a rich class of uncertainties and nonlinearities, including hard nonlinearities (e.g. saturation), time delays, and unmodeled dynamics, as summarized in \cite{Megretski:97} and \cite{Veenman:16}. Therefore, although our nominal systems are assumed to be polynomials, including  IQCs allows us to extend our analysis framework to the class of systems beyond polynomial systems. IQCs are also used in \cite{Chris:19} for robustness analysis of linear time varying systems, and in \cite{IANNELLI:2019} for region of attraction analysis of nonlinear systems.

The proposed analysis framework considers both hard and soft IQC factorizations. Dissipation inequalities usually require IQCs to hold over all finite horizons (hard IQCs) \cite{Summers:13}, \cite{Balas:11}. However, IQCs are often available in the infinite-time horizon (soft IQCs), while the hard IQCs are not. To mitigate this issue, we incorporate soft IQCs in dissipation inequalities by making use of a lower bound derived from \cite{Fetzer:2018}, which is valid for soft IQCs over all finite horizons.
 %We formulate the reachable set computation as a quasi-convex optimization problem and use sum-of-squares (SOS) programming to solve it.  First, theorems for over-approximating the reachable sets based on dissipation inequalities \cite{Willems:72}  and time-domain IQCs are given, and then are replaced by stronger sum-of-squares conditions using the generalized S-procedure \cite{Parrilo:00}. A shape function and its corresponding variable-sized region are introduced to bound the over-approximation and reduce conservatism. 
 We formulate the reachable set computation as generalized SOS optimization problems that are quasi-convex, which can be solved effectively by bisection. In addition, our optimization problems do not require a feasible initialization of the storage function. %As shown with challenging examples in Section \ref{sec:examples}, the outer-bounds computed in this paper are tight (See Section \ref{sec:compare_Vs}) and efficiently incorporate IQCs, which introduce additional states and dynamics to the analysis. 

There are various existing approaches to reachability analysis, including interval analysis \cite{Jaulin:01}, Hamilton-Jacobi methods \cite{Mitchell:00}, ellipsoid methods \cite{Varaiya:07} and polytope methods \cite{Borrelli:11}. Dissipation inequalities and SOS programming were introduced to reachability analysis in \cite{Packard:05}, \cite{Weehong:06}, \cite{Tan:08}, and further extended to uncertain nonlinear systems in \cite{Topcu:09}. However, these previous results do not consider integral quadratic constraints (IQCs) and they bound the reachable sets on the infinite-time horizon, which might yield overly conservative estimates for finite-time trajectories. %In many practical settings, systems only undergo finite-time trajectories, such as robotic systems and space launch and re-entry vehicles, and infinite-horizon formulation yields overly conservative estimates for finite-time trajectories.

Finite-time reachability analysis using SOS programming is considered in \cite{Majumdar:17}, \cite{Tobenkin:11} using Lyapunov-based method. Another finite-time reachability analysis paper, \cite{Bai:18}, provides an approximate analytical polynomial solution to the Hamilton-Jacobi-Isaacs partial differential equations (HJE) for reachable set computation. %The approach in \cite{Bai:18} enforces the approximate polynomial solution at initial time to stay close to the initial condition of the HJE and disallows the use of Lagrange multipliers in the subsequent SOS optimization. Thus, the results are prone to conservatism for system with high-dimensional state space. 
The works \cite{xue2019inner, xue2019robust} extend \cite{Bai:18} to the state-constrained polynomial systems with time-varying uncertainties. While the work \cite{xue2019inner} focuses on inner-approximating the finite-time horizon backward reachable set (BRS), the work \cite{xue2019robust} focuses on computing the infinite-time horizon robust invariant sets. Occupation measures based method is proposed in \cite{henrion2013convex} to compute outer-approximations to the BRS for polynomial systems with control inputs. Note that \cite{Majumdar:17, Tobenkin:11, xue2019inner, xue2019robust,henrion2013convex} only allow for time varying parametric uncertainty. In contrast, our finite horizon results account for various sources of uncertainty. To our knowledge, this is the first paper that introduces IQCs to reachability analysis. % In addition, unlike \cite{Majumdar:17} and \cite{Tobenkin:11}, we guarantee the convergence to the global optima when the local region and shape function are given, and unlike \cite{Bai:18}, we make full use of Lagrange multipliers (see Section~\ref{sec:reach_l2}).

To summarize, the main contributions of the paper are: (i) to analyze finite-time horizon reachability with robustness guarantees,  (ii) to extend the framework to a large class of uncertain systems by incorporating IQCs.
	
The paper is organized as follows. Section \ref{sec:reach_l2} presents the problem setup, the basic theorem, and computation method for outer bounding the reachable sets for the nominal system: nonlinear system with $\mathcal{L}_2$ disturbances and time varying uncertain parameters. Sections~\ref{sec:reach_perturb} considers the robust reachability analysis for the uncertain system: interconnection of nominal system and perturbations $\Delta$ described by hard IQCs. Sections~\ref{sec:soft_IQC} extends the robust analysis framework to $\Delta$ that satisfies soft IQCs. Section \ref{sec:examples} applies the method to several aircraft examples, one of which is compared with the result obtained using the method from \cite{Majumdar:17}. Section \ref{sec:conclu} summarizes the results.

\subsection{Notation}
$\mathbb{R}^{m\times n}$ and $\mathbb{S}^n$ denote the set of $m$-by-$n$ real matrices and $n$-by-$n$ real, symmetric matrices. 
%A single superscript index denotes vectors; for example, $\mathbb{R}^m$ is the set of $m \times 1$ vectors whose elements are in $\mathbb{R}$. 
$\mathbb{RL}_{\infty}$ is the set of rational functions with real coefficients that have no poles on the imaginary axis. $\mathbb{RH}_{\infty} \subset \mathbb{RL}_{\infty}$ contains functions that are analytic in the closed right-half of the complex plane.  $\mathcal{L}_2^{n_r}$ is the space of measurable functions $r: [0, \infty) \rightarrow \mathbb{R}^{n_r}$, with $\norm{r}^2_2 := \int_0^{\infty} r(t)^\top r(t) dt < \infty$. Associated with $\mathcal{L}_2^{n_r}$ is the extended space $\mathcal{L}_{2e}^{n_r}$, consisting of functions whose truncation $r_T(t) := r(t)$ for $t \le T$; $r_T(t) := 0$ for $t > T$, is in $\mathcal{L}_2^{n_r}$ for all $T > 0.$ 
Define the finite-horizon $\mathcal{L}_2$ norm as $\norm{r}_{2,[t_0,T]}:= \left(\int_{t_0}^T r(t)^\top r(t)dt \right)^{1/2}$. If $r$ is measurable, and $\norm{r}_{2,[t_0,T]} < \infty$ then $r \in \mathcal{L}^{n_r}_2[t_0,T]$. The finite horizon induced $\mathcal{L}_2$ to $\mathcal{L}_2$ norm is denoted as $\norm{\cdot}_{2 \rightarrow 2, [t_0,T]}$.
For $\xi \in \mathbb{R}^n$, $\mathbb{R}[\xi]$ represents the set of polynomials in $\xi$ with real coefficients, and $\R^m[\xi]$ and $\R^{m \times p}[\xi]$ to denote all vector and matrix valued polynomial functions. The subset $\Sigma[\xi] := \{\pi = \pi_1^2 + \pi_2^2 + ... + \pi_M^2 : \pi_1, ..., \pi_M \in \mathbb{R}[\xi]\}$ of $\mathbb{R}[\xi]$ is the set of SOS polynomials in $\xi$. 
For $\eta \in \mathbb{R}$, and continuous $r: \mathbb{R}^n \rightarrow \mathbb{R}$, $\Omega_{\eta}^r := \{x \in \mathbb{R}^n : r(x) \le \eta\}.$ For $\eta \in \mathbb{R}$, and continuous $g : \mathbb{R} \times \mathbb{R}^{n} \rightarrow \mathbb{R}$, define $\Omega_{t,\eta}^{g} := \{x \in \mathbb{R}^n : g(t,x) \le \eta\}$, a $t$-dependent set. $KYP$ denotes a mapping to the block 2-by-2 matrix:
\begin{align}
KYP(Y,A,B,C,D,M) := \bmat{A^\top Y + YA & YB \\ B^\top Y & 0} + \bmat{C^\top \\ D^\top} M \bmat{C & D}. \label{eq:kyp_def}
\end{align}

\section{Nominal Reachability Analysis} \label{sec:reach_l2}
Consider the nominal nonlinear system $N$ defined on $[t_0, T]$:
\begin{equation}
\dot{x}(t) = f(t, x(t), w(t), \delta(t)), \label{eq:nominal_sys}
\end{equation}
where $x(t) \in \mathbb{R}^n$ is the state, $w(t) \in \mathbb{R}^{n_w}$ is the external disturbance, $\delta(t) \in \R^{n_\delta}$ is the time varying uncertain parameter, and $f : \mathbb{R} \times \mathbb{R}^n \times \mathbb{R}^{n_w} \times \R^{n_\delta}\rightarrow \mathbb{R}^n$ is the vector field. 
\begin{assumption}\label{ass:ass1}
(i) the disturbance $w$ satisfies $w \in \mathcal{L}_{2}^{n_w}$ with $\norm{w}_{2,[t_0, T]} < R$ for some $R>0$, (ii) there exists a non-decreasing polynomial function $h: \mathbb{R} \rightarrow \mathbb{R}_{\ge 0}$ with $h(t_0) = 0$, $h(T) = 1$ such that
\begin{align}
\int_{t_0}^t w(\tau)^\top w(\tau) d\tau < R^2 h(t), \ \forall t \in [t_0, T], \label{eq:w_rate}
\end{align}
%(iii) all the trajectories of \eqref{eq:nominal_sys} start in the set $\mathcal{X}_0 \subset \R^n$: $x(t_0) \in \mathcal{X}_0$, 
(iii) the function $\delta: [t_0, T] \rightarrow \R^{n_\delta}$ is measurable, and for each $t \in [t_0, T]$, $\delta(t) \in \mathcal{D} := \{\delta \in \R^{n_\delta} : p_\delta (\delta) \ge 0 \}$.
\end{assumption}

The function $h$ is used to describe how fast the energy of $w$ can be released on the interval $[t_0,T]$. Moreover, the polyomial $p_\delta \in \R[\delta]$ describes the prior knowledge that bounds the uncertainty $\delta$. Next, the definition of the forward reachable set (FRS) is given as follows: 
\begin{definition}
	Under Assumption~\ref{ass:ass1}, the FRS of \revise{the system $N$} \eqref{eq:nominal_sys} from $\mathcal{X}_0$ at time $T$ is defined as
	\begin{align}
	FRS(T;\revise{N},t_0,\mathcal{X}_0,R,h,\mathcal{D}) &:= \{x(T) \in \R^n \ :  \ \exists x(t_0) \in \mathcal{X}_0, w \ \text{satisfying} \ \eqref{eq:w_rate} \ \text{and} \ \delta(t) \in \mathcal{D}, \nonumber \\ 
	 &~~~~~~~~~~~~~~~~~~~~~~~~~~~~~~~~~~~~\revise{\text{such that} \ x(T) \ \text{is a solution to} \ \eqref{eq:nominal_sys} \ \text{at time} \ T} \}. \nonumber
	\end{align}
\end{definition}

Our goal is to outer bound this FRS, and the following theorem provides a way of achieving it based on dissipation-inequalities.

\begin{theorem} \label{thm3}
	Let Assumption~\ref{ass:ass1} hold. Given vector field $f:\R \times \R^n \times \R^{n_w} \times \R^{n_\delta} \rightarrow \R^n$, time interval $[t_0, T]$, local region $\mathcal{X}_l \subset \mathbb{R}^n$,  set of initial conditions $\mathcal{X}_0 \subset \mathbb{R}^n$, disturbance bound $R$, function $h$, and set of uncertain parameters $\mathcal{D}$, suppose there exists a $\mathcal{C}^1$ function $V: \mathbb{R} \times \mathbb{R}^n \rightarrow \mathbb{R}$ that satisfies
	\begin{subequations}\label{eq:constraintD}
	\begin{align}
	&\frac{\partial V(t,x)}{\partial t} + \frac{\partial V(t,x)}{\partial x} f(t,x,w,\delta) \le w^\top w, \ \ \forall (t, x, w, \delta) \in [t_0, T]\times \mathcal{X}_l  \times \mathbb{R}^{n_w} \times \mathcal{D},  \label{eq:constraintD1}\\
	&\mathcal{X}_0 \subseteq \Omega_{t_0,0}^V , \label{eq:constraintD2}\\
	&\Omega_{t,R^2h(t)}^V \subseteq \mathcal{X}_l , \ \ \forall t \in [t_0, T]. \label{eq:constraintD3}
	\end{align}
	\end{subequations}
	Then $x(T) \in \Omega_{T,R^2}^V$ for all $x(t_0) \in \mathcal{X}_0$, \revise{where $x(T)$ is a solution to the system $N$ \eqref{eq:nominal_sys} at time $T$ from $x(t_0)$}. Therefore $\Omega_{T,R^2}^V$ is an outer bound to the $FRS(T;\revise{N},t_0,\mathcal{X}_0,R,h,\mathcal{D})$.
\end{theorem}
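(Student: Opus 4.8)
The plan is to run the standard dissipation-inequality argument along an arbitrary admissible trajectory, inserting a continuity (bootstrapping) step that keeps the trajectory inside the local region $\mathcal{X}_l$ where \eqref{eq:constraintD1} is valid. Fix any $x(t_0) \in \mathcal{X}_0$ together with signals $w$ satisfying \eqref{eq:w_rate} and $\delta(t) \in \mathcal{D}$, and let $x(\cdot)$ be the resulting solution of \eqref{eq:nominal_sys} on $[t_0, T]$. Since $V$ is $\mathcal{C}^1$, the map $t \mapsto V(t, x(t))$ is differentiable, and by the chain rule
\[
\frac{d}{dt} V(t, x(t)) = \frac{\partial V(t,x(t))}{\partial t} + \frac{\partial V(t,x(t))}{\partial x}\, f(t, x(t), w(t), \delta(t)).
\]
At any time where $x(t) \in \mathcal{X}_l$, the right-hand side is bounded above by $w(t)^\top w(t)$ through \eqref{eq:constraintD1}.

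The core estimate then follows by integration: on any subinterval $[t_0, t]$ on which the trajectory remains in $\mathcal{X}_l$,
\[
V(t, x(t)) \le V(t_0, x(t_0)) + \int_{t_0}^t w(\tau)^\top w(\tau)\, d\tau.
\]
By \eqref{eq:constraintD2}, $V(t_0, x(t_0)) \le 0$, and by \eqref{eq:w_rate} the integral is strictly below $R^2 h(t)$; hence $V(t, x(t)) < R^2 h(t)$ on such a subinterval. If this can be propagated up to $t = T$, then, since $h(T) = 1$, we obtain $V(T, x(T)) < R^2$, i.e. $x(T) \in \Omega_{T, R^2}^V$, which is the claim. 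It therefore only remains to certify that the trajectory stays in $\mathcal{X}_l$ throughout $[t_0, T]$.

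This containment is the main obstacle, and the difficulty is that it appears circular: remaining in $\mathcal{X}_l$ is what licenses the bound on $V(t, x(t))$, while that bound is exactly what forces $x(t) \in \mathcal{X}_l$ through $\Omega_{t, R^2 h(t)}^V \subseteq \mathcal{X}_l$ in \eqref{eq:constraintD3}. I would break the circularity with a maximal-interval argument. Set
\[
\tau^* := \sup\bigl\{\, t \in [t_0, T] : V(s, x(s)) \le R^2 h(s) \ \text{for all } s \in [t_0, t] \,\bigr\}.
\]
Because $V(t_0, x(t_0)) \le 0 = R^2 h(t_0)$, the set is nonempty, and by continuity of $s \mapsto V(s, x(s)) - R^2 h(s)$ the defining inequality holds on the closed interval $[t_0, \tau^*]$. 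Consequently $x(s) \in \Omega_{s, R^2 h(s)}^V \subseteq \mathcal{X}_l$ for every $s \in [t_0, \tau^*]$, so the integration above is legitimate up to $\tau^*$ and yields the strict bound $V(\tau^*, x(\tau^*)) < R^2 h(\tau^*)$.

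To conclude I would show $\tau^* = T$. If instead $\tau^* < T$, the strict inequality $V(\tau^*, x(\tau^*)) < R^2 h(\tau^*)$ together with continuity extends $V(t, x(t)) \le R^2 h(t)$ to an interval $[t_0, \tau^* + \epsilon]$, contradicting the maximality of $\tau^*$; note that the strictness in \eqref{eq:w_rate} is precisely what provides the buffer enabling this extension. Hence $\tau^* = T$, the trajectory never leaves $\mathcal{X}_l$, and the integrated bound is valid at $T$, giving $x(T) \in \Omega_{T, R^2}^V$. As $x(t_0) \in \mathcal{X}_0$ was arbitrary, $\Omega_{T, R^2}^V$ outer bounds $FRS(T; N, t_0, \mathcal{X}_0, R, h, \mathcal{D})$. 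The degenerate instant $t = t_0$ (where $h(t_0) = 0$ makes \eqref{eq:w_rate} vacuous) needs only a separate remark: evaluating \eqref{eq:constraintD3} at $t = t_0$ gives $x(t_0) \in \Omega_{t_0, 0}^V \subseteq \mathcal{X}_l$, so the trajectory starts inside $\mathcal{X}_l$ and the bootstrapping can begin.
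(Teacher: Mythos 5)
Your proof is correct and takes essentially the same approach as the paper's: both certify that the trajectory cannot leave the time-varying sublevel set $\Omega_{t,R^2h(t)}^V \subseteq \mathcal{X}_l$ via a continuity/bootstrap argument, and then integrate the dissipation inequality, using \eqref{eq:constraintD2}, the strictness of \eqref{eq:w_rate}, and $h(T)=1$ to conclude $x(T) \in \Omega_{T,R^2}^V$. The only difference is cosmetic: you work with the supremum of times up to which $V(t,x(t)) \le R^2h(t)$ holds and contradict its maximality, whereas the paper works with the infimum of violation times and derives the contradiction $R^2h(T_2) = V(T_2,x(T_2)) < R^2h(T_2)$ there.
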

\begin{proof}
	Combining constraints (\ref{eq:constraintD1}) and (\ref{eq:constraintD3}), we have the following dissiaption inequality:
	\begin{align}
	&\frac{\partial V(t,x)}{\partial  t} + \frac{\partial V(t,x)}{\partial x}f(t, x, w,\delta) \le w^\top w, \ \ \forall (t, x, w, \delta), \ \text{s.t.} \ t \in [t_0, T], \ x \in \Omega^V_{R^2h(t)}, \ w \in \mathbb{R}^{n_w}, \ \delta \in \mathcal{D}. \nonumber
	\end{align}
	Since this dissipation inequality only holds on the set $\Omega_{t,R^2h(t)}^V$, we need to first prove that all the states starting from $\mathcal{X}_0$ won't leave $\Omega_{t,R^2h(t)}^V$, for all $t \in [t_0, T]$. Assume there exist a time instance $T_1 \in [t_0,T]$, $x_0 \in \mathcal{X}_0$, and signals $w$ satisfying \eqref{eq:w_rate}, $\delta(t) \in \mathcal{D}$, such that a trajectory of the system $N$ starting from $x(t_0) = x_0$ satisfies $V(T_1,x(T_1)) > R^2 h(T_1)$. Define $T_2 = \inf_{V(t,x(t))>R^2 h(t)} t$. Therefore, the dissipation inequality holds on $[t_0, T_2]$, and we can integrate it over $[t_0,T_2]$:
	\begin{align}
	V(T_2,x(T_2)) - V(t_0,x(t_0))&\leq  \int_{t_0}^{T_2} w(t)^\top w(t) dt. \nonumber  \\
	\intertext{By assumption $x_0 \in \mathcal{X}_0$, it follows from \eqref{eq:constraintD2} that $V(t_0,x(t_0)) \leq 0$. Combing it with $w$ satisfying \eqref{eq:w_rate} to show}
	R^2 h(T_2) = V(T_2,x(T_2)) & <  R^2h(T_2). \nonumber 
	\end{align}
	This is contradictory. Therefore there doesn't exist a $T_1 \in [t_0,T]$, such that $x(T_1) \notin \Omega_{T_1,R^2 h(T_1)}^V$. As a result, for all $x(t_0) \in \mathcal{X}_0$, we have $x(t) \in \Omega_{t,R^2h(t)}^V$, for all $t \in [t_0, T]$, and thus $x(T) \in \Omega_{T,R^2}^V$. 
\end{proof}

If the function $h$ is not given, then there is no \textit{a priori} knowledge on how $\int_{t_0}^t w(\tau)^\top w(\tau) d\tau$ depends on $t$. In this case the constraint (\ref{eq:constraintD3}) is modified to be
\begin{align}
\Omega_{t,R^2}^V \subseteq \mathcal{X}_l , \ \ \forall t \in [t_0, T]. \label{eq:relax2}
\end{align}
This case is more restrictive for the storage function and yields larger outer bounds on the FRS.

\revise{We are interested in a tight outer bound to the FRS.  Thus it is natural to search for a storage function $V$ that minimizes the volume of $\Omega_{T,R^2}^V$.} However, an explicit expression is not available for the volume of $\Omega_{T,R^2}^V$ for a generic storage function. Instead, we introduce a user-specified shape function $q$ and its corresponding variable sized region $\Omega_{\alpha}^q = \{x\in \mathbb{R}^n: q(x) \le \alpha\}$. The shape function $q$ can be associated with the user's initial guess of the shape of the actual reachable set or can signify the desired shape of the outer bound. An example of $q$ is given in Section \ref{ex1}. The volume of $\Omega_{T,R^2}^V$ can be shrunk, by enforcing 
\begin{align}
\Omega_{T,R^2}^V \subseteq \Omega_{\alpha}^q, \label{eq:constraintD4} 
\end{align} 
while minimizing $\alpha$. For more heuristic metrics for the volume of semi-algebraic sets, the reader is referred to \cite{MPeet:2019}.
%Then the task is to find a $\mathcal{C}^1$ storage function $V: \mathbb{R} \times \mathbb{R}^n \rightarrow \mathbb{R}$ to solve the following optimization problem, denoted as \hbox{$\boldsymbol{hi}$-$\boldsymbol{opt_1(f, \mathcal{X}_l , t_0, T, \mathcal{X}_0, q, R, h,\mathcal{D})}$},
%\begin{align}
%\min_{V, \alpha} \ \ \alpha \ \ \text{s.t.} \ \ \text{(\ref{eq:constraintD}) and (\ref{eq:constraintD4}) hold.} \nonumber 
%\end{align}

%\begin{remark}\label{remark:local}
%	The choice of $\mathcal{X}_l $ affects the feasibility of constraints (\ref{eq:constraintD1}) to (\ref{eq:constraintD3}). If $\mathcal{X}_l $ does not enclose the union of all the actual reachable sets for $\mathcal{X}_0$ from $t_0$ to $T$, then constraints (\ref{eq:constraintD1}) to (\ref{eq:constraintD3}) cannot be feasible. On the other hand, if $\mathcal{X}_l $ is too large, it might introduce too much conservatism in the analysis. More details on how to choose $\mathcal{X}_l $ can be found in Section \ref{sec:Choose_LocalRgion}.
%\end{remark}

\revise{To find a storage function $V$ that satisfies the constraints in \eqref{eq:constraintD} and \eqref{eq:constraintD4}, we leverage SOS programming.} To do so, we assume that $\mathcal{X}_0$ and $\mathcal{X}_l $ are semi-algebraic sets: $\mathcal{X}_0 := \{x \in \mathbb{R}^n : r_0(x) \leq 0\}$, and
	\begin{align}	
	\mathcal{X}_l :=& \revise{\{x \in \mathbb{R}^n : p(x) \leq \eta\},} \label{eq:local_def} 
	\end{align}
	where $r_0$, $p \in \R[x]$ are  specified by the user, \revise{and $\eta \in \R$ is a decision variable that determines that volume of $\mathcal{X}_l$}. Additionally, we restrict the system model, shape function, and storage function to polynomials, i.e., $f \in \R^n[(t,x,w,\delta)]$, $q \in \R[x]$, $V \in \R[(t,x)]$.
	 Also define $g(t) := (t - t_0)(T - t)$, whose value is nonnegative when $t \in [t_0, T].$ The polynomial functions are used to formulate the set containment constraints. With these ideas, sufficient SOS conditions for the set containment constraints (\ref{eq:constraintD}) and (\ref{eq:constraintD4}) are obtained. \revise{Also by choosing $\alpha$ as the cost function,} we obtain the following SOS optimization problem, denoted as $\boldsymbol{sosopt_1(f, p, g, q, r_0, R, h, p_\delta)}$,
\begingroup
\allowdisplaybreaks
\begin{subequations} \label{eq:nominal_sos}
\begin{align} 
\min_{\alpha, \revise{\eta}, s, V} &\ \ \alpha \nonumber \\
\text{s.t.} \ \ \ & s_5 - \epsilon_1 \in \Sigma[x], s_6 - \epsilon_2 \in \Sigma[(x,t)], \epsilon_1 > 0, \epsilon_2 > 0, \nonumber \\
& s_i \in \Sigma[(x,w,\delta, t)], \forall i \in \{1,2,3\}, s_4 \in \Sigma[x], s_7 \in \Sigma[(x,t)], V \in \mathbb{R}[(t,x)], \label{eq:sos1}  \\
& -\left(\frac{\partial  V}{\partial t} + \frac{\partial  V}{\partial x}f- w^\top w \right)  + \revise{(p-\eta)}s_1 - s_2 g - s_3 p_\delta \in \Sigma[(x,w,\delta,t)], \label{eq:sos2} \\
& -V\vert_{t=t_0} + s_4 r_0 \in \Sigma[x],  \label{eq:sos3}\\
& - \revise{(p-\eta)} s_6 + V - R^2 h - s_7 g \in \Sigma[(x,t)],  \label{eq:sos4}\\
& - ( q - \alpha)s_5 + V\vert_{t=T} - R^2 \in \Sigma[x],  \label{eq:sos5}
\end{align}
\end{subequations}
\endgroup
\noindent where $s_i, i \in \{1,...,7\},$ are SOS polynomials, called multipliers, whose coefficients are to be determined, $\epsilon_1$ and $\epsilon_2$ are small positive numbers chosen by the user to guarantee that $s_5$ and $s_6$ cannot take the value of zero. \revise{The optimization $sosopt_1$ is nonconvex as it is bilinear in two groups of decision variables $(\alpha, \eta)$ and $(s_1, s_5, s_6)$. Since we can't bisect on both $\alpha$ and $\eta$ at the same time, we propose Algorithm~\ref{alg:sol_algo} that solves the problem in two steps, and bisects on one decision variable at one step.}
\begin{algorithm2e} [H]
	\KwData{$f, p, g,q,r_0,R,h,p_\delta $}
	
	\textbf{Preparation Step:} solve for $\eta^\star = \arg\min \eta$ s.t. \eqref{eq:sos1}--\eqref{eq:sos4} by bisecting on $\eta$.
	
	\textbf{Main Step:} solve for $\alpha^\star = \arg\min \alpha$ s.t. \eqref{eq:sos1}--\eqref{eq:sos5} by using $\eta = \eta^\star$ and bisecting on $\alpha$. 
	
	\KwResult{Minimized $\alpha^\star$, outer bound $\Omega_{T,R^2}^V$.}
	\caption{Computing the outer bound}
	\label{alg:sol_algo}
\end{algorithm2e}

The first step of Algorithm~\ref{alg:sol_algo} is to find the smallest feasible local region $\mathcal{X}_l$ (with respect to $p$) by setting aside the original objective function and constraint \eqref{eq:sos5}, and minimizing $\eta$. The second is to find the least conservative outer bound. The first and second steps bisect on $\eta$ and $\alpha$, respectively. Each iteration of bisection involves holding $\alpha / \eta$ fixed and solving a feasibility problem, which is a standard semidefinite programming problem and is convex. If the fixed value of $\alpha / \eta$ leads to infeasibility of the problem, then try to solve it with a larger $\alpha / \eta$; otherwise, decrease the value of $\alpha / \eta$.

	\begin{proposition}
		The SOS constraints \eqref{eq:sos2}--\eqref{eq:sos5} are sufficient conditions for (\ref{eq:constraintD}) and (\ref{eq:constraintD4}).
	\end{proposition}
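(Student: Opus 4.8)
The plan is to establish each of the four algebraic constraints \eqref{eq:sos2}--\eqref{eq:sos5} as a sufficient condition for its counterpart in \eqref{eq:constraintD1}--\eqref{eq:constraintD4}, one at a time, via the generalized S-procedure. The only fact I would invoke repeatedly is that any polynomial in $\Sigma$ is globally nonnegative, combined with the sign patterns of the constraint-defining data: on $[t_0,T]$ one has $g(t)=(t-t_0)(T-t)\ge 0$; on $\mathcal{X}_l$ one has $p(x)-\eta\le 0$ by \eqref{eq:local_def}; on $\mathcal{D}$ one has $p_\delta(\delta)\ge 0$; and on $\mathcal{X}_0$ one has $r_0(x)\le 0$.

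For \eqref{eq:sos2} $\Rightarrow$ \eqref{eq:constraintD1}, I would evaluate the SOS polynomial at an arbitrary point $(t,x,w,\delta)\in[t_0,T]\times\mathcal{X}_l\times\R^{n_w}\times\mathcal{D}$. Its nonnegativity rearranges to
\[
-\left(\frac{\partial V}{\partial t} + \frac{\partial V}{\partial x}f - w^\top w\right) \ge -(p-\eta)s_1 + s_2 g + s_3 p_\delta,
\]
and since $s_1,s_2,s_3\ge 0$ while $(p-\eta)\le 0$, $g\ge 0$, $p_\delta\ge 0$ on this set, the right-hand side is nonnegative, yielding exactly \eqref{eq:constraintD1}. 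The implication \eqref{eq:sos3} $\Rightarrow$ \eqref{eq:constraintD2} is identical in spirit: for $x\in\mathcal{X}_0$ the term $s_4 r_0\le 0$, so $-V\vert_{t=t_0}\ge -s_4 r_0\ge 0$, i.e. $V(t_0,x)\le 0$, which says $x\in\Omega_{t_0,0}^V$.

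The two remaining implications are where I expect the only genuine subtlety. For \eqref{eq:sos4} $\Rightarrow$ \eqref{eq:constraintD3}, I would fix $t\in[t_0,T]$ and $x$ with $V(t,x)\le R^2 h(t)$, so that $V-R^2 h\le 0$ and $s_7 g\ge 0$; nonnegativity of the SOS expression then forces $(p-\eta)s_6 \le V - R^2 h - s_7 g \le 0$, and because $s_6\ge\epsilon_2>0$ is \emph{strictly} positive I may divide through to conclude $p(x)-\eta\le 0$, i.e. $x\in\mathcal{X}_l$. The implication \eqref{eq:sos5} $\Rightarrow$ \eqref{eq:constraintD4} is the same maneuver on the slice $t=T$: for $x$ with $V(T,x)\le R^2$ one gets $(q-\alpha)s_5 \le V\vert_{t=T} - R^2 \le 0$, and strict positivity $s_5\ge\epsilon_1>0$ lets me divide to reach $q(x)\le\alpha$, i.e. $x\in\Omega_\alpha^q$.

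Thus the only nonroutine point---and precisely the reason for introducing the positive constants $\epsilon_1,\epsilon_2$---is that the last two set-containment constraints require dividing by a multiplier. An ordinary SOS multiplier could vanish at some point of the relevant set, breaking the sign deduction, so the certificates legitimately need $s_5$ and $s_6$ bounded away from zero. Everything else reduces to a direct sign check, so I would present the four cases compactly rather than repeating the argument in full.
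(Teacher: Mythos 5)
Your proof is correct and follows essentially the same route as the paper's: a generalized S-procedure sign check of each SOS certificate \eqref{eq:sos2}--\eqref{eq:sos5} on the corresponding semi-algebraic set. The only difference is that you make explicit the division by $s_5 \ge \epsilon_1 > 0$ and $s_6 \ge \epsilon_2 > 0$ in the last two containments, a point the paper's proof leaves implicit (its surrounding text notes that $\epsilon_1,\epsilon_2$ exist precisely to keep $s_5,s_6$ away from zero).
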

	
	\begin{proof}
		\eqref{eq:sos2} $\Rightarrow$ \eqref{eq:constraintD1}: The proof follows from the generalized S-procedure \cite{Parrilo:00}. In \eqref{eq:sos2}, when ($x, t,\delta$) satisfies $p(x) \leq \eta$ (i.e. $x \in \mathcal{X}_l$), $g(t) \ge 0$ (i.e. $t \in [t_0, T]$), $p_\delta \ge 0$ (i.e. $\delta \in \mathcal{D}$), for the polynomial in \eqref{eq:sos2} to be nonnegative, then $-(\frac{\partial V(t,x)}{\partial t} + \frac{\partial V(t,x)}{\partial x}f(t,x,w,\delta) - w^\top w)$ must be nonnegative. Thus (\ref{eq:sos2}) implies (\ref{eq:constraintD1}). 
		
		\eqref{eq:sos3} $\Rightarrow$ \eqref{eq:constraintD2}:  
		In \eqref{eq:sos3}, when a state $x$ satisfies $r_0(x) \leq 0$ (i.e. $x \in \mathcal{X}_0$), for the polynomial in \eqref{eq:sos3} to be nonnegative, then $-V(t_0, x)$ must be nonnegative (i.e. $x \in \Omega_{t_0, 0}^V$). 
		
		\eqref{eq:sos4} $\Rightarrow$ \eqref{eq:constraintD3}: 
		In (\ref{eq:sos4}), when a state and time pair ($x, t$) satisfies $V(t,x) \leq R^2 h(t)$ (i.e. $x \in \Omega_{t,R^2 h(t)}^{V}$) and $g(t) \ge 0$ (i.e. $t \in [t_0, T]$), for the polynomial in (\ref{eq:sos4}) to be nonnegative, then $-s_6(t,x) (p(x)-\eta)$ must be nonnegative (i.e. $x \in \mathcal{X}_l$).
		
		\eqref{eq:sos5} $\Rightarrow$ \eqref{eq:constraintD4}: 
		In \eqref{eq:sos5}, when a state $x$ satisfies $V(T,x) \leq R^2$ (i.e. $x \in \Omega_{T, R^2}^{V}$), for the polynomial in \eqref{eq:sos5} to be nonnegative, then $-(q(x) - \alpha)s_5(x)$ must be nonnegative (i.e. $x \in \Omega_{\alpha}^q$). 
	\end{proof}

\subsection{Application to a 2-state example} \label{ex1}
Consider the following academic example from \cite{Packard:05}:
\begin{align} \label{eq:Tan_dyn}
\begin{split}
\dot{x}_1 =& -x_1 + x_2 - x_1 x_2^2,  \\
\dot{x}_2 =& -x_2 - x_1^2x_2 + w,
\end{split}
\end{align}
where $w$ is the disturbance satisfies (\ref{eq:w_rate}) with $R = 1$ and $h(t) = t^2/T^2$. In this example the uncertain parameter is not considered. We take $[t_0, T] = [0, 1 \ \text{sec}]$, $r_0(x) = x^\top x - 1$. In Figure \ref{fig:2-state}, the green points are simulation points $x(T)$, at $T = 1$ sec, for the system (\ref{eq:Tan_dyn}) using disturbance signals $w$, with initial conditions inside $\mathcal{X}_0$, which is shown with the red dotted curve. In this example,  the shape function $q$ is obtained by computing the minimum volume ellipsoid $\Omega_1^q$ that contains all the simulation points $x(T)$ at $T = 1$ sec, and $q(x) = 4.84x_1^2 - 3.05x_1x_2 + 1.50x_2^2$. A more accurate shape function can be obtained by fitting a higher degree polynomial to the simulation points \cite{Alessandro:05}. Here, the polynomial $p$ that defines the local region is obtained by computing the minimum volume ellipsoid $\Omega^{p}_1$ that contains some sampling points on the simulation trajectories $x(t)$, for $t \in [0, T]$, and $p = 0.989 x_1^2 - 0.051 x_1 x_2 + 0.949 x_2^2 + 0.001 x_1 + 0.001 x_2$. Solving the first step of Algorithm~\ref{alg:sol_algo} we obtain $\eta^\star = 1.044$, and $\mathcal{X}_l$ is determined.
% $\mathcal{X}_l = \{x \in \mathbb{R}^2: (x - x_c)^\top diag(1/r_1^2, 1/r_2^2) (x - x_c) \leq 1\}$, where the center $x_c$ and radii $r_1, r_2$ can be specified through the following process: first, run a large number of simulations starting from different initial conditions within $\mathcal{X}_0$ from $t_0$ to $T$, and the simulation trajectories provide the initial guess for the union of reachable sets from $t_0$ to $T$. Heuristically pick $x_c$, $r_1$ and $r_2$ so that $\mathcal{X}_l $ contains all the simulation trajectories. Then solve $sosopt_2(f, p, g, q, r_0, R, h)$ as a feasibility problem, with $\alpha$ fixed to be a large number to make the constraints looser. If it is infeasible, try to solve it with larger $r_1, r_2$; if it is feasible, then try smaller $r_1, r_2$. Here, we choose $r_1 = r_2 = 1.1$ and $x_c = [0;0]$. 
Solving the second step gives $\alpha^\star = 1.37$. The outer bound is shown with the black curve, which tightly encloses all $x(T)$.
\begin{figure}[h]
	\centering
	\includegraphics[width=0.53\columnwidth]{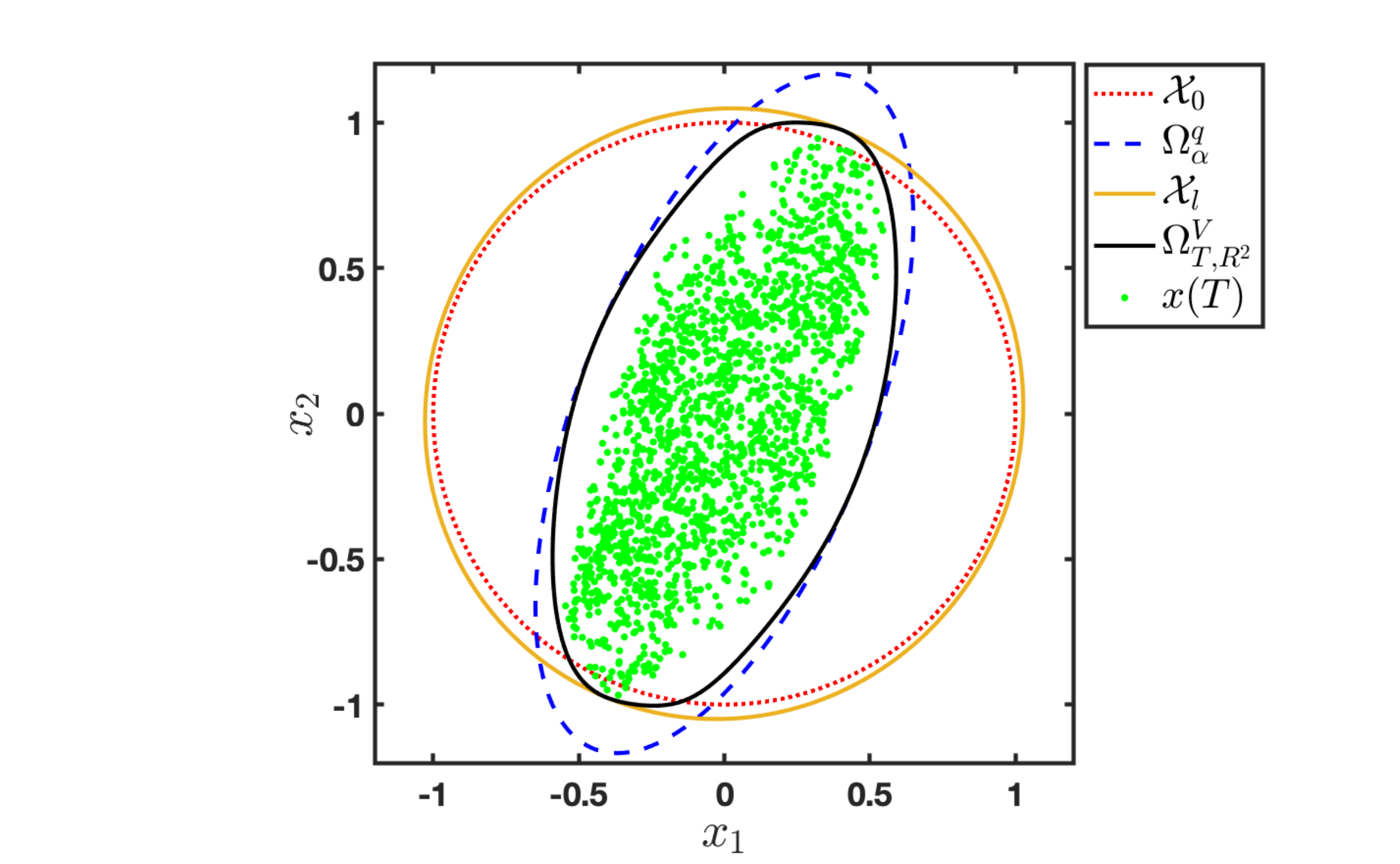}
	\caption{Outer bound of reachable set at $T = 1$ sec for the 2-state example with $\mathcal{L}_2$ disturbance. }
	\label{fig:2-state}    
\end{figure}

\section{Robust Reachability Analysis with Hard IQCs} \label{sec:reach_perturb}
Consider the uncertain nonlinear system shown in Figure \ref{fig:Fu}, which is an interconnection $F_u(G,\Delta)$ of a nonlinear system $G$ and a perturbation $\Delta$. \revise{The dynamics of the nonlinear system $G$ are of the form}:
\begin{align}\label{eq:sysG4IQC}
\begin{split}
\dot{x}_G(t) =& f(t, x_G(t), l(t), w(t), \delta(t)),  \\
v(t) =& r(t, x_G(t), l(t), w(t), \delta(t)), 
\end{split}
\end{align}
where $x_G(t) \in \mathbb{R}^{n_G}$ is the state of $G$, and $\delta(t) \in \R^{n_\delta}$ is the uncertain parameter. \revise{The inputs of $G$ are $w(t) \in \mathbb{R}^{n_w}$ and $l(t) \in \mathbb{R}^{n_l}$, while the output is $v(t) \in \mathbb{R}^{n_v}$. The system $G$ is defined by the mappings} $f : \mathbb{R} \times \mathbb{R}^{n_G} \times \mathbb{R}^{n_l} \times \mathbb{R}^{n_w} \times \R^{n_\delta}$ $\rightarrow \mathbb{R}^{n_G}$ and $r: \mathbb{R} \times \mathbb{R}^{n_G} \times \mathbb{R}^{n_l} \times \mathbb{R}^{n_w}\times \R^{n_\delta} \rightarrow \R^v$. The perturbation is a \revise{bounded and causal} operator $\Delta : \mathcal{L}_{2e}^{n_v} \rightarrow \mathcal{L}_{2e}^{n_l}$. \revise{Assume the interconnection $F_u(G,\Delta)$ formed by $G$ and $\Delta$ through the constraint
\begin{equation}
l(\cdot) = \Delta (v(\cdot))\label{eq:def_Delta}
\end{equation}
is well-posed. The well-posedness of the interconnection $F_u(G,\Delta)$ is defined as follows.}
\begin{definition}
		\revise{$F_u(G,\Delta)$ is well-posed if for all $x_G(t_0) \in \R^{n_G}$ and $w \in \mathcal{L}_{2e}^{n_w}$ there exist unique solutions $x_G \in \mathcal{L}_{2e}^{n_G}$, $v \in \mathcal{L}_{2e}^{n_v}$, and $l \in \mathcal{L}_{2e}^{n_l}$ satisfying \eqref{eq:sysG4IQC} and \eqref{eq:def_Delta} with a causal dependence on $w$.}
\end{definition}

Again, assume all the trajectories of $F_u(G,\Delta)$ start from $x_G(t_0) \in \mathcal{X}_0 \subset \R^{n_G}$. Similarly, the FRS of $F_u(G,\Delta)$ from $\mathcal{X}_0$ at time $T$ is defined as \revise{
\begin{align} \label{def:FRS_Fu}
FRS(T;F_u(G,\Delta),t_0,\mathcal{X}_0,R,h, \mathcal{D}) &:= \{x_G(T) \in \R^{n_G} :  \exists \ x_G(t_0) \in \mathcal{X}_0, w \ \text{satisfying} \ \eqref{eq:w_rate} \ \text{and} \ \delta(t) \in \mathcal{D},  \nonumber \\
&~~~~~~~~~~~~\revise{\text{such that} \ x_G(T) \ \text{is a solution to} \ \eqref{eq:sysG4IQC} - \eqref{eq:def_Delta} \ \text{at time} \ T} \}.  
\end{align}   }
\begin{figure}[h]
	\centering
	\includegraphics[width=0.3\textwidth]{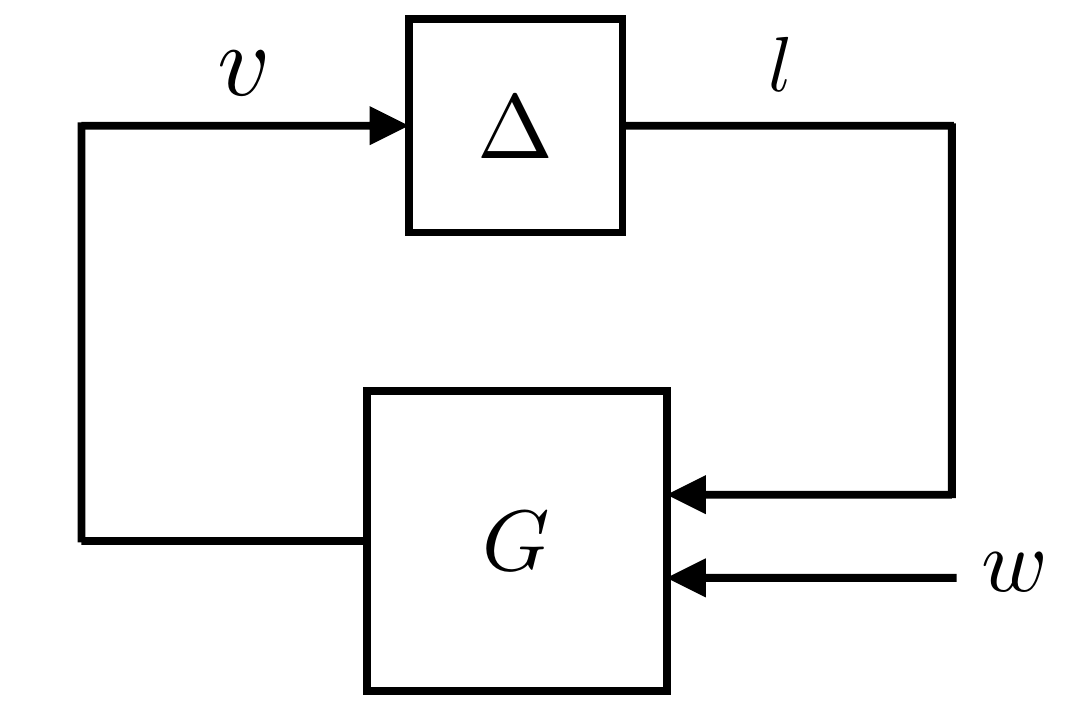}
	\caption{Interconnection $F_u(G,\Delta)$ of a nominal nonlinear system $G$ and a perturbation $\Delta$}
	\label{fig:Fu}    
\end{figure}

From robust control modeling \cite{Zhou:1996}, the perturbation $\Delta$ can represent various types of nonlinearity and uncertainty, including hard nonlinearities (e.g. saturation), time delays, and unmodeled dynamics. Different types of perturbation have different input-output properties, and each property can be described by its corresponding IQCs~\cite{Megretski:97}.  To help define IQCs, we introduced a virtual filter $\Psi$ (shown in Fig~\ref{fig:filter}) that is an linear time invariant (LTI) system, driven by the input $v$ and output $l$ of $\Delta$, and with zero initial condition $x_{\psi}(t_0) = 0^{n_\psi}$. Its dynamics are given by
\begin{subequations} \label{eq:filter_sys}
\begin{align}
\dot{x}_{\psi}(t) =& A_{\psi} x_{\psi}(t) + B_{\psi 1}v(t) + B_{\psi 2}l(t),  \\
z(t) =& C_{\psi} x_{\psi}(t) + D_{\psi 1}v(t) + D_{\psi 2}l(t), \label{eq:z_def}
\end{align}
\end{subequations}
where $x_{\psi}(t) \in \mathbb{R}^{n_{\psi}}$ is the state, and $z(t) \in \mathbb{R}^{n_z}$ is the output. %\textcolor{red}{Discuss more on the importance of $\Psi$. Consider the sector IQC and slope bounded IQC. Since signals are corelated across time, then a filter comes into place.} 
For many types of perturbations (e.g. the ones in Example~\ref{ex:IQC_example_LTI} -- \ref{ex:IQC_example_para}), we can choose $\Psi$ to be an identity matrix, i.e., $z = [v;l]$. But dynamic filters are able to capture the correlation between the input and output signals of $\Delta$ across time, which enriches the description of $\Delta$. For examples on dynamic filters, the reader is referred to \cite{Zames:68, Megretski:97,Veenman:16}. IQCs can be either defined in frequency domain or time domain. The use of time domain IQCs is required by the dissipation inequality used in the paper.  Time domain IQCs consist of soft IQCs and hard IQCs, which are quadratic constraints on the output $z$ associated with a matrix $M$ over infinite (soft IQC) or finite (hard IQC) horizons. The definition for hard IQCs is given below, the use of soft IQCs is discussed in Section~\ref{sec:soft_IQC}.

\begin{figure}[h]
	\centering
	\includegraphics[width=0.4\textwidth]{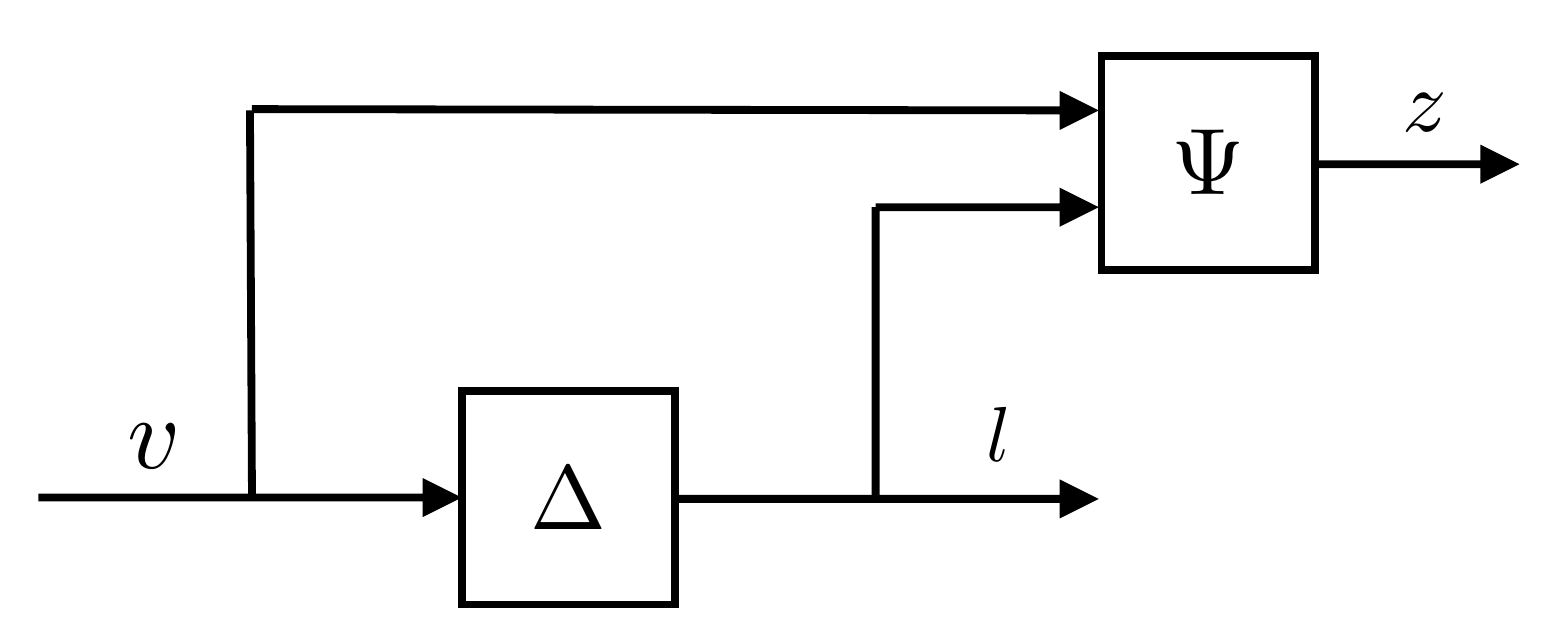}
	\caption{Graphical interpretation for time domain IQCs}
	\label{fig:filter}    
\end{figure}

\begin{definition}
	Let $\Psi \in \mathbb{RH}_{\infty}^{n_z \times (n_v + n_l)}$ and $M \in \mathbb{S}^{n_z}$ be given.  A bounded, causal operator  $\Delta: \mathcal{L}_{2e}^{n_v} \rightarrow \mathcal{L}_{2e}^{n_l}$ satisfies the \underline{hard IQC} defined by $(\Psi, M)$ if the following condition holds for all $v \in \mathcal{L}_{2e}^{n_v}, \ and \ l = \Delta(v)$:
		\begin{align}
		\int_{t_0}^t z(\tau)^\top M z(\tau) d\tau \ge 0, \ \forall t \in [t_0, T], \label{eq:hardIQC}
		\end{align}
	where $z = \Psi \left[ \begin{smallmatrix}v \\l \end{smallmatrix}\right]$ (Eq. \ref{eq:z_def}) is the output of $\Psi$ driven by the inputs $(v,l)$.
\end{definition}
 We use the notation $\Delta \in$ HardIQC$(\Psi, M)$ to indicate that $\Delta$ satisfies the hard IQC specified ($\Psi, M$), \revise{i.e., given any input $v$ of $\Delta$, the output $l$ must be such that $z = \Psi \left[\begin{smallmatrix} v \\ l\end{smallmatrix}\right]$ satisfies the constraint \eqref{eq:hardIQC} characterized by $(\Psi, M)$.} Next, we give two examples on different types of uncertainties and the corresponding hard IQCs.
\begin{example}\label{ex:IQC_example_LTI}
Consider the set $\mathcal{S}_1$ of LTI uncertainties with a given norm bound $\sigma >0$, i.e., $\Delta \in \mathcal{S}_1$, if $\Delta \in \mathbb{RH}_{\infty}$ with $\norm{\Delta}_{\infty} \leq \sigma$. %We choose the frequency domain multiplier $\Pi = diag(\sigma^2 \Pi_{11}(j\omega), -\Pi_{11}(j\omega))$ with $\Pi_{11} \in \mathbb{RL}_{\infty}$  and $\Pi_{11}(j\omega) = \Pi^*_{11}(j\omega) \ge 0$ for all $\omega$. Based on the choice of~ $\Pi$, we have a factorization representation for it: $\Pi = \Psi^\sim M \Psi$, 
It's proved in  \cite{Balakrishnan:02} that $\Delta \in HardIQC(\Psi, M_{D})$ over any finite horizon $T < \infty$, where $\Psi := \left[\begin{smallmatrix}\Psi_{11} \ \ &0 \\ 0 \ \ &\Psi_{11}\end{smallmatrix}\right]$ with $\Psi_{11} \in \mathbb{RH}_{\infty}^{n_z\times 1}$ and
\begin{align} \label{eq:MD}
\revise{M_D \in \mathcal{M}_1 := \left\{ \left[\begin{smallmatrix} \sigma^2 M_{11} \ \ & 0 \\ 0\ \ & -M_{11}\end{smallmatrix}\right] : M_{11} \succeq 0 \right\}.}
\end{align}
A typical choice for $\Psi_{11}$ \cite{Veenman:16} is
\begin{align}
	\Psi_{11}^{d,m} = \bmat{1, \frac{1}{(s+m)}, \cdots, \frac{1}{(s+m)^d}}^\top, \text{with} \ \ m > 0, \label{eq:filter_choice}
\end{align}
where $m$ and $d$ are selected by the user.
\end{example}

\begin{example}\label{ex:nonlinear}
Consider the set $\mathcal{S}_2$ of nonlinear, time varying, uncertainties with a given norm-bound $\sigma$, i.e. $\Delta \in \mathcal{S}_2$, if $\norm{\Delta}_{{2 \rightarrow 2},[t_0, T]} \leq \sigma$. If $\Delta \in \mathcal{S}_2$, then the perturbation $\Delta$ satisfies the hard IQCs defined by $(\Psi, M)$ defined below:
\begin{align}
\Psi = I_{n_{v} + n_l}, \ \revise{M \in \mathcal{M}_2 := \left\{\left[\begin{smallmatrix}\sigma^2 \lambda I_{n_v} & 0\\ 0 & -\lambda I_{n_l} \end{smallmatrix}\right]: \  \lambda \ge 0 \right\}.}
\end{align}
\end{example}

Since the behavior of the perturbation $\Delta$ can be described by an IQC associated with a filter $\Psi$ and a matrix $M$, then the robust analysis on the original uncertain system $F_u(G,\Delta)$ can be instead conducted on the extended system shown in Fig.~\ref{fig:G_Psi} with an additional constraint \eqref{eq:hardIQC}. \revise{The precise relation $l = \Delta(v)$, for analysis, is replaced by the constraint on $z$.} This extended system is an interconnection of $G$ and $\Psi$, with $\Delta$ been removed. The dynamics of the extended system are of the form:
\begin{subequations}\label{eq:extended_sys}
	\begin{align}
	\dot{x}(t) &= F(t,x(t),l(t),w(t),\delta(t)), \\
	z(t) &= H(t,x(t),l(t),w(t),\delta(t)), \label{eq:output_H}
	\end{align}
\end{subequations}
where the state $x := [x_G; x_\psi] \in \mathbb{R}^n, n = n_G + n_\psi$, gathers the state of $G$ and $\Psi$. \revise{The mappings $F$, and $H$ are given by (dropping the dependence on $t$):
\begin{equation} \label{eq:FH_def}
\begin{array}{lll}
F(t,x,l,w,\delta) := &\bmat{f(t, x_G, l, w, \delta) \\ A_{\psi} x_{\psi} + B_{\psi 1}r(t, x_G, l, w, \delta) + B_{\psi 2}l },  \\
H(t,x,l,w,\delta) := &C_{\psi} x_{\psi} + D_{\psi 1}r(t, x_G, l, w, \delta) + D_{\psi 2}l.
\end{array}
\end{equation} }
\begin{figure}[h]
	\centering
	\includegraphics[width=0.42\textwidth]{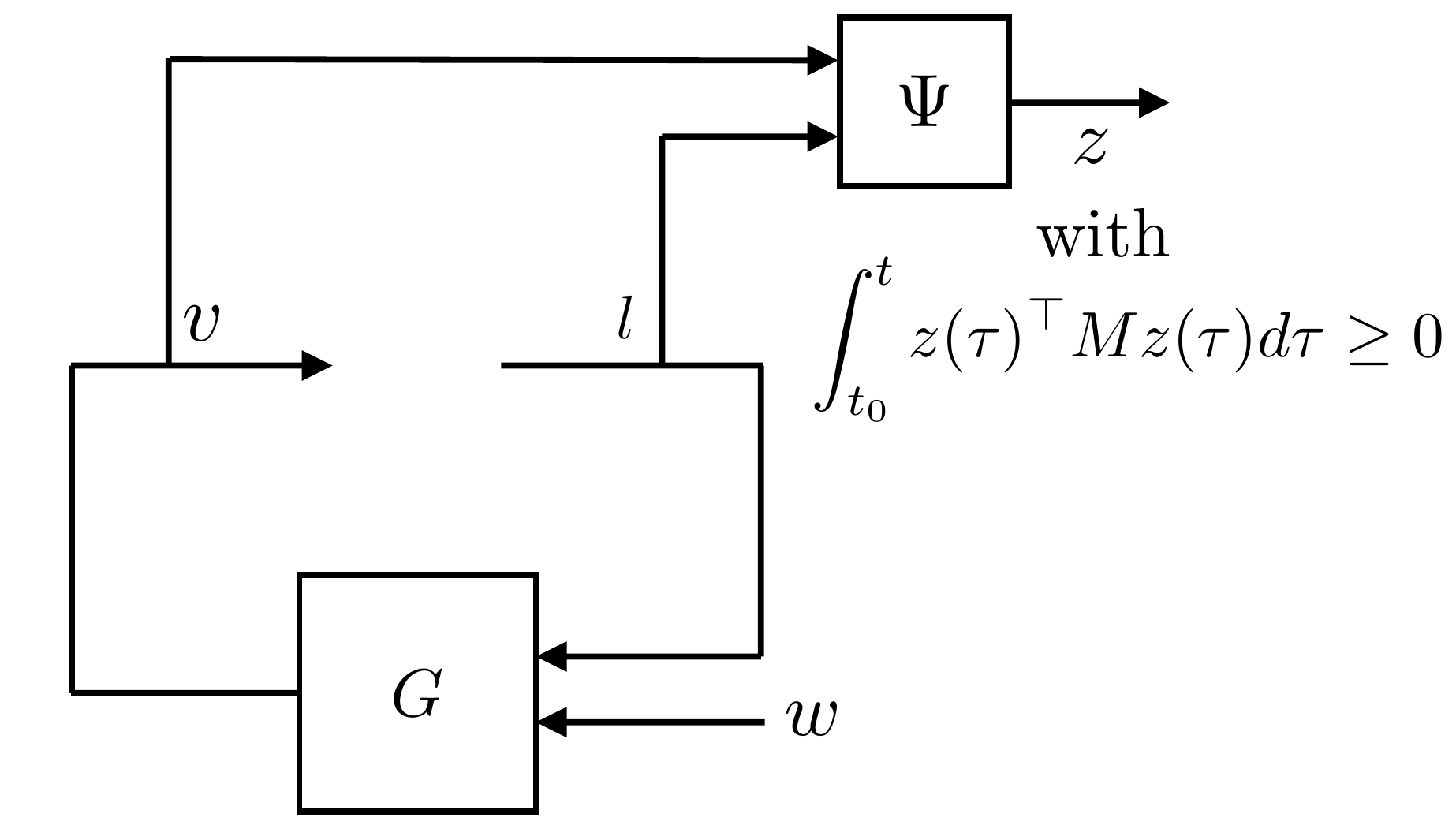}
	\caption{Extended system of $G$ and $\Psi$}
	\label{fig:G_Psi}    
\end{figure}

The original uncertain system to be analyzed is $F_u(G,\Delta)$, which has a set of initial conditions $\mathcal{X}_0$ and an input $w$. The analysis is instead conducted on the extended system \eqref{eq:extended_sys}, which has a set of initial conditions $\mathcal{X}_0\times \{0^{n_\psi}\}$, and two inputs $w$ and $l$. \revise{For any input $w \in \mathcal{L}_{2}^{n_w}$ and initial condition $x_G(t_0) \in \R^{n_G}$, the solutions $v \in \mathcal{L}_{2e}^{n_v}$ and $l \in \mathcal{L}_{2e}^{n_l}$ to the original system $F_u(G,\Delta)$ satisfy the constraint \eqref{eq:hardIQC}. The extended system \eqref{eq:extended_sys} with the IQC \eqref{eq:hardIQC} ``covers" the responses of the original uncertain system $F_u(G,\Delta)$. Specifically, given any input $w \in \mathcal{L}_{2}^{n_w}$ and initial condition $x_G(t_0) \in \R^{n_G}$, the input $l \in \mathcal{L}_{2e}^{n_l}$ is implicitly constrained in the extended system so that the pair $(v,l)$ satisfies the IQC \eqref{eq:hardIQC}. This set of $(v,l)$ that satisfies the IQC \eqref{eq:hardIQC} includes all input/output pairs of $\Delta$. Therefore, the response of this extended system subject to this implicit constraint \eqref{eq:hardIQC} includes all behaviors of the original uncertain system $F_u(G,\Delta)$. The following theorem provides the method for outer bounding the FRS of the uncertain system $F_u(G,\Delta)$ by conducting analysis on the constrained extended system \eqref{eq:extended_sys}}.
\begin{theorem} \label{thm:hardIQCs}
	\revise{Let $G$ be a nonlinear system defined by \eqref{eq:sysG4IQC},  and $\Delta: \mathcal{L}^{n_v}_{2e} \rightarrow \mathcal{L}^{n_l}_{2e}$ be a bounded and causal operator. Let Assumption~\ref{ass:ass1} hold.} Additionally, assume \revise{(i) $F_u(G,\Delta)$ is well-posed,} (ii) $\Delta \in$ HardIQC$(\Psi, M)$, with $\Psi$ and $M$ given, and (iii) all the trajectories of the extended system start from $\mathcal{X}_0\times \{0^{n_\psi}\}$. For some \revise{$F$, $H$ defined in \eqref{eq:FH_def}}, time interval $[t_0, T]$, local region $\mathcal{X}_l \subset \mathbb{R}^{n_G}$, set of initial conditions $\mathcal{X}_0 \subset \mathbb{R}^{n_G}$, disturbance bound $R$, function $h$, and set of uncertain parameters $\mathcal{D}$, function $q: \R^{n_G} \rightarrow \R$, and $\alpha \in \R$, suppose there exists a $\mathcal{C}^1$ function $V: \mathbb{R} \times \mathbb{R}^{n} \rightarrow \mathbb{R}$ that satisfies
	\begingroup
	\allowdisplaybreaks
	\begin{subequations} \label{eq:cond_hardIQC}
	\begin{align}
	&\frac{\partial V(t,x)}{\partial t} + \frac{\partial V(t,x)}{\partial x}F(t,x,l,w,\delta)  + z^\top M z \leq w^\top w, \ \forall (x,t,l,w,\delta) \in  \mathcal{X}_l  \times \mathbb{R}^{n_{\psi}} \times [t_0, T] \times \mathbb{R}^{n_l} \times \mathbb{R}^{n_w} \times \mathcal{D}, \label{eq:constraintF1} \\
	&\mathcal{X}_0 \times \{0^{n_\psi}\} \subseteq \{x \in \mathbb{R}^{n} : V(t_0, x) \leq 0\}, \label{eq:constraintF2} \\
	&\left\{x_G \in \mathbb{R}^{n_G} : V(T, x) \leq R^2\right\} \subseteq \Omega_{\alpha}^{q}, \ \forall x_{\psi} \in \mathbb{R}^{n_{\psi}},  \label{eq:constraintF3} \\
	&\left\{x_G \in \mathbb{R}^{n_G} : V(t, x) \le R^2 h(t) \right\} \subseteq \mathcal{X}_l , \ \forall (t,x_\psi) \in [t_0, T] \times \mathbb{R}^{n_{\psi}},  \label{eq:constraintF4}
	\end{align}
	\end{subequations}
	\endgroup
	where \revise{$z$ is the output of the map $H$. Then all the trajectories of $F_u(G,\Delta)$ (defined by \eqref{eq:sysG4IQC} -- \eqref{eq:def_Delta}) starting from $x_G(t_0) \in \mathcal{X}_0$ satisfy $x_G(T) \in \Omega_{\alpha}^{q}$.}  Therefore $\Omega_{\alpha}^q$ is an outer bound to the $FRS(T;F_u(G,\Delta),t_0,\mathcal{X}_0,R,h,\mathcal{D})$ \eqref{def:FRS_Fu}.
\end{theorem}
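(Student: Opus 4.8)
The plan is to follow the contradiction argument from the proof of Theorem~\ref{thm3}, with the hard IQC supplying exactly the term needed to absorb the extra $z^\top M z$ appearing in the dissipation inequality~\eqref{eq:constraintF1}. First I would reduce the analysis of $F_u(G,\Delta)$ to the extended system~\eqref{eq:extended_sys}: by well-posedness (assumption (i)), for any $x_G(t_0)\in\mathcal{X}_0$ and $w$ satisfying~\eqref{eq:w_rate} there exist unique signals $x_G, v, l \in \mathcal{L}_{2e}$ with $l=\Delta(v)$, and feeding $(v,l)$ into the filter $\Psi$ with $x_\psi(t_0)=0$ produces a trajectory of the extended state $x=[x_G;x_\psi]$ governed by $F$ with output $z=H$ as in~\eqref{eq:FH_def}. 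Because $\Delta\in\text{HardIQC}(\Psi,M)$, this $z$ satisfies $\int_{t_0}^t z^\top M z\,d\tau \ge 0$ for all $t\in[t_0,T]$ by~\eqref{eq:hardIQC}; this nonnegativity is the only property of $\Delta$ I will use.

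Next I would show the trajectory never leaves the time-varying sublevel set $\{x : V(t,x)\le R^2 h(t)\}$. Suppose, for contradiction, that $V(T_1,x(T_1))>R^2 h(T_1)$ for some $T_1\in[t_0,T]$, and set $T_2 := \inf\{t : V(t,x(t))>R^2 h(t)\}$. By continuity $V(T_2,x(T_2))=R^2 h(T_2)$, and on $[t_0,T_2]$ one has $V(t,x(t))\le R^2 h(t)$, so by~\eqref{eq:constraintF4} the component $x_G(t)\in\mathcal{X}_l$ throughout, which is precisely the region on which~\eqref{eq:constraintF1} is valid. Integrating~\eqref{eq:constraintF1} over $[t_0,T_2]$ gives
\begin{align}
V(T_2,x(T_2)) - V(t_0,x(t_0)) + \int_{t_0}^{T_2} z^\top M z\,d\tau \le \int_{t_0}^{T_2} w^\top w\,d\tau. \nonumber
\end{align}
Since $x(t_0)=[x_G(t_0);0^{n_\psi}]\in\mathcal{X}_0\times\{0^{n_\psi}\}$, constraint~\eqref{eq:constraintF2} yields $V(t_0,x(t_0))\le 0$; discarding the nonnegative IQC integral and applying the disturbance bound~\eqref{eq:w_rate} then forces $R^2 h(T_2)=V(T_2,x(T_2)) < R^2 h(T_2)$, a contradiction.

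Therefore $V(t,x(t))\le R^2 h(t)$ for all $t\in[t_0,T]$; at $t=T$, using $h(T)=1$, this reads $V(T,x(T))\le R^2$, and~\eqref{eq:constraintF3} immediately gives $x_G(T)\in\Omega_{\alpha}^{q}$. Since the FRS~\eqref{def:FRS_Fu} is the collection of all such endpoints $x_G(T)$, the set $\Omega_{\alpha}^{q}$ outer bounds it. The main obstacle I anticipate is not the algebra but the careful bookkeeping at the interface between $\Delta$ and the extended system: one must invoke well-posedness to guarantee that the signals $(v,l)$ lie in $\mathcal{L}_{2e}$ so that the finite-horizon hard IQC~\eqref{eq:hardIQC} is applicable over each $[t_0,T_2]$, and one must confirm that the filter's zero initial condition $x_\psi(t_0)=0$ is consistent both with the IQC definition and with assumption (iii). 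Everything else transfers from Theorem~\ref{thm3}, the single new idea being that the hard IQC's nonnegativity over all finite horizons is exactly what licenses dropping the $z^\top M z$ term at the contradiction-generating time $T_2$.
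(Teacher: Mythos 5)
Your proposal is correct and follows essentially the same argument as the paper's proof: the same contradiction at $T_2 = \inf\{t : V(t,x(t)) > R^2 h(t)\}$, the same integration of the dissipation inequality combined with \eqref{eq:constraintF2} and \eqref{eq:w_rate}, and the same use of the hard IQC's finite-horizon nonnegativity to discard the $\int z^\top M z$ term, followed by \eqref{eq:constraintF3} to project onto $\Omega_{\alpha}^{q}$. Your explicit bookkeeping (continuity giving $V(T_2,x(T_2)) = R^2 h(T_2)$, well-posedness ensuring the signals are $\mathcal{L}_{2e}$ so the hard IQC applies on $[t_0,T_2]$, and $h(T)=1$ at the final step) is, if anything, slightly more careful than the paper's own write-up.
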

\begin{proof}
By assumption that $F_u(G,\Delta)$ is well-posed, the signals $(x,v,l,z)$ generated for the extended system for the input $w \in \mathcal{L}_2^{n_w}$ are $\mathcal{L}_{2e}$ signals. By combining \eqref{eq:constraintF1} and \eqref{eq:constraintF4} we have the following dissipation inequality:
\begin{align}
	&\frac{\partial V(t,x)}{\partial t} + \frac{\partial V(t,x)}{\partial x}F(t,x,l,w,\delta) + z^\top M z \leq w^\top w, \nonumber \\
	&\quad\quad \quad \quad\quad\quad\quad\quad\quad\quad\quad\quad\quad\quad\forall (x, t,l,w, \delta) \ \text{s.t.} \ x \in \Omega^V_{t,R^2h(t)}, t \in [t_0, T], l \in \mathbb{R}^{n_l}, w \in \mathbb{R}^{n_w}, \delta \in \mathcal{D}. \label{eq:diss_hardIQC}
\end{align}
Since \eqref{eq:diss_hardIQC} only holds on the set $\Omega_{t,R^2h(t)}^V$, we need to first prove that all the states starting from $\mathcal{X}_0 \times \{0^{n_\psi}\}$ won't leave $\Omega_{t,R^2h(t)}^V$, for all $t \in [t_0, T]$. Assume there exist a time instance $T_1 \in [t_0,T]$, $x_0 \in \mathcal{X}_0 \times \{0^{n_\psi}\}$, and signals $w$ satisfying \eqref{eq:w_rate}, $\delta(t) \in \mathcal{D}$, $l(t) \in \R^{n_l}$, such that a trajectory of the extended system starting from $x(t_0) = x_0$ satisfies $V(T_1,x(T_1)) > R^2 h(T_1)$. Define $T_2 = \inf_{V(t,x(t))>R^2 h(t)} t$, and integrate \eqref{eq:diss_hardIQC} over $[t_0,T_2]$:
\begin{align}
V(T_2,x(T_2)) - V(t_0,x(t_0)) + \int_{t_0}^{T_2} z(t)^\top M z(t) dt &\leq  \int_{t_0}^{T_2} w(t)^\top w(t) dt. \nonumber  \\
\intertext{\revise{By assumnption $x_0 \in \mathcal{X}_0 \times \{0^{n_\psi}\}$, it follows from constraint \eqref{eq:constraintF2} that $V(t_0,x(t_0)) \leq 0$. Comibining it with $w$ satisfing \eqref{eq:w_rate} yields}}
V(T_2,x(T_2)) + \int_{t_0}^{T_2} z(t)^\top M z(t) dt & <  R^2h(T_2). \label{eq:first_ineq} \\
\intertext{\revise{Next it follows from the hypothesis that $\Delta \in $ HardIQC$(\Psi,M)$ that }}
R^2 h(T_2) = V(T_2,x(T_2)) & < R^2h(T_2). \label{eq:second_ineq}
\end{align}
We can see the contradiction in \eqref{eq:second_ineq}. Therefore there doesn't exist a $T_1 \in [t_0,T]$, such that $x(T_1) \notin \Omega_{T_1,R^2 h(T_1)}^V$. As a result, for all $x(t_0) \in \mathcal{X}_0 \times \{0^{n_\psi}\}$, we have $x(t) \in \Omega_{t,R^2h(t)}^V$, for all $t \in [t_0, T]$, and thus $x(T) \in \Omega_{T,R^2}^V$. Finally, it follows from \eqref{eq:constraintF3} that $x_G(T) \in \Omega_{\alpha}^q$. 
\end{proof}

Notice that from the proof, $\Omega_{T,R^2}^V$ is an outer bound to the FRS of the extended system from $\mathcal{X}_0 \times \{0^{n_\psi}\}$. The set $\Omega_{\alpha}^q$, a projection of $\Omega_{T,R^2}^V$ on the $x_G$ space, is an outer bound to the FRS of the actual uncertain system $F_u(G,\Delta)$.

There is a large library of IQCs for various types of perturbations $\Delta$ \cite{Megretski:97}. It is common to formulate optimization problems that search over combinations of valid IQCs.  Specifically, let $\left\{ (\Psi_k,M_k) \right\}_{k=1}^N$ be a collection of valid time-domain IQCs for a particular $\Delta$. If $z_k$ is the output of the filter $\Psi_k$ and $\lambda_1,....,\lambda_N$ are non-negative scalars then it follows that:
\begin{align}
&\int_{t_0}^T \sum_{k=1}^N \lambda_{k} z_k(t)^\top M_k(t) z_k(t) dt \ge 0, \ \forall v_k \in \mathcal{L}_{2e}^{n_{v_k}}, \ \ l_k = \Delta(v_k), \ \ \text{and} \ \ T \ge t_0. \nonumber 
\end{align}
In other words, a conic combination of time-domain IQCs is also an IQC.  This conic combination can be represented as $\Psi := [\Psi_1; ...; \Psi_N]$ and $M := blkdiag(\lambda_1 M_1,...,\lambda_N M_N)$. The scalars $\lambda_1,...\lambda_N \ge 0$ are typically decision variables in an optimization used to find the best IQC for the robustness analysis. In this parameterization $\Psi$ is fixed and $M$ is a linear function of variables $\lambda_1,...,\lambda_N$ subject to non-negativity constraints. More general IQC parameterizations can be found in \cite{Veenman:16}: given the type of the perturbation, the corresponding IQCs are parametrized by a fixed filter $\Psi$ chosen by the analyst and $M$ in a feasible set $\mathcal{M}$ described by linear matrix inequality (LMI) constraints. These general parametrizations will be used in the rest of the paper. \revise{Note that Example~\ref{ex:IQC_example_LTI} and \ref{ex:nonlinear} also provide instances of the general parametrization, where $M_D$ and $M$ are restricted to convex sets $\mathcal{M}_1$ and $\mathcal{M}_2$.}

\revise{Along with $V$, we also treat $M \in \mathcal{M}$ as a decision variable to give the optimization more flexibility. Assume the set $\mathcal{M}$ is convex and described by LMIs.} %Using these general parameterizations, we redefine the problem as finding a $\mathcal{C}^1$ storage function $V$ and a matrix $M$ to solve the following high-level optimization problem, \hbox{$\boldsymbol{hi}$-$\boldsymbol{opt_2(F, H, \mathcal{X}_l , t_0, T, \mathcal{X}_0, q, R, h, \mathcal{D}, \Psi, \mathcal{M})}$}, where we search $M$ over a convex set $\mathcal{M}$ instead of using a fixed $M$ to give the optimization more flexibility:
%\begingroup
%\allowdisplaybreaks
%\begin{align}
%\min_{V, \alpha, M} \ \ \alpha\ \
%\text{s.t.} \ \ \text{(\ref{eq:constraintF1}) to (\ref{eq:constraintF4}) and} \ M \in \mathcal{M}.  \nonumber
%\end{align}
%\endgroup
Again, assume $\mathcal{X}_l$, $\mathcal{X}_0$ are parametrized by $p \in \R[x_G]$ and $r_0 \in \R[x_G]$, respectively, and restrict $q \in \R[x_G]$. \revise{By applying the generalized S-procedure \cite{Parrilo:00} to \eqref{eq:cond_hardIQC}, we obtain} the following SOS optimization problem, $\boldsymbol{sosopt_2(F, H, p, g, q, r_0, R, h, p_\delta, \Psi, \mathcal{M})}$, 
\begingroup
\allowdisplaybreaks
\begin{subequations}
\begin{align} 
\min_{\alpha, \eta, s, V, M, \epsilon_1, \epsilon_2} & \ \ \alpha \nonumber \\
\text{s.t.} \ \ \ \ \ 
&s_5 - \epsilon_1 \in \Sigma[x], s_6 - \epsilon_2 \in \Sigma[(x,t)],\epsilon_1 > 0, \epsilon_2 > 0, \ M \in \mathcal{M}, \ V \in \mathbb{R}[(t,x)], \nonumber \\
&s_4 \in \Sigma[x_G], \ s_7 \in \Sigma[(x,t)], \ s_i \in \Sigma[(x,w,l,t,\delta)], \ \forall i \in \{1,2,3\},  \nonumber \\
& -\bigg(\frac{\partial  V}{\partial t} + \frac{\partial  V}{\partial x} F + z^\top Mz- w^\top w \bigg)   + \revise{(p - \eta)}s_1 -s_2g - s_3 p_\delta  \in \Sigma[(x,w,l,t,\delta)],  \\
& -V\vert_{t = t_0, x=[x_G;0^{n_\psi}]} + s_4 r_0 \in \Sigma[x_G],  \\
& -( q - \alpha)s_5 + V\vert_{t = T} - R^2 \in \Sigma[x],  \\
& - \revise{(p-\eta)}s_6 + V - R^2h  - s_7 g \in \Sigma[(x,t)], 
\end{align}
\end{subequations}
\endgroup
which \revise{is again bilinear in $(\alpha, \eta)$ and $(s_1, s_5, s_6)$, and can be solved by using Algorithm~\ref{alg:sol_algo}}. Although in the SOS formulation, $M$ is restricted to be time-invariant, extensions to allow for time-varying $M$ are possible. 

\revise{To keep track of all the tuning parameters in the paper, we provide a table that summarizes them, their corresponding physical meanings, and some of their examples:}
\begin{table}[h]
	\centering
	\caption{\revise{List of tuning parameters} \label{tab:table_tuning}}
	\begin{tabular}{ |M{2.9cm}|M{2.7cm}|M{3cm}|M{3.4cm}|M{3cm}|}
		\hline
		Physical meanings &Shape of $\mathcal{X}_l$ &Outer bound shape &Energy releasing rate &Filter for $\Delta$ \\
		\hline
		Parameters &$p$ & $q$ &$h$  & $\Psi$ \\ \hline 
		Examples &Sections~\ref{ex1}, \ref{sec:softhard_compare}  & Sections~\ref{ex1}, \ref{sec:softhard_compare} &Section~\ref{ex1}  & Sections~\ref{sec:softhard_compare}, \ref{ex6} \\ \hline 
	\end{tabular}
\end{table}

\section{Robust Reachability Analysis  with Soft IQCs}\label{sec:soft_IQC}
The previous section gives the result using hard IQCs, however, the library of IQCs are usually provided in frequency domain \cite{Megretski:97}, whose definition is given below:
\begin{definition}
	Let $\Pi = \Pi^\sim \in \mathbb{RL}_{\infty}^{(n_v + n_l)\times (n_v + n_l)}$ be given. A bounded, causal operator $\Delta: \mathcal{L}^{n_v}_{2e} \rightarrow  \mathcal{L}^{n_l}_{2e}$ satisfies the \underline{frequency domain IQC} defined by the multiplier $\Pi$, if the following inequality holds for all $v \in \mathcal{L}_{2}^{n_v}$ and $l = \Delta(v)$,
	\begin{align}
	\bigintss_{-\infty}^{\infty} \bmat{\hat{v}(j\omega)\\\hat{l}(j\omega)}^* \Pi(j\omega) \bmat{\hat v (j\omega)\\\hat l(j\omega)} d\omega \ge 0, \label{eq:freqQC}
	\end{align}
	where $\hat v$ and $\hat l$ are Fourier transforms of $v$ and $l$.
\end{definition} 
\revise{The frequency domain multiplier can be factorized as $\Pi = \Psi^\sim M \Psi$ where $M \in \mathbb{S}^{n_z}$ and $\Psi$ is a stable, LTI system of appropriate dimension. Such a factorization always exists \cite{Veenman:16} but is not unique.  This factorization $(\Psi,M)$ gives rise to a time-domain soft IQC as defined next.} 
\begin{definition}
	Let $\Psi \in \mathbb{RH}_{\infty}^{n_z \times (n_v + n_l)}$ and $M \in \mathbb{S}^{n_z}$ be given. A bounded, causal operator  $\Delta: \mathcal{L}_{2e}^{n_v} \rightarrow \mathcal{L}_{2e}^{n_l}$ satisfies the \underline{soft IQC} defined by $(\Psi, M)$ if the following inequality holds for all $v \in \mathcal{L}_{2}^{n_v} \ and \ l = \Delta(v)$:
	\begin{align}
	\int_{t_0}^{\infty} z(t)^\top M z(t) dt \ge 0, \label{eq:softIQC}
	\end{align}
	where $z = \Psi \left[ \begin{smallmatrix}v \\l \end{smallmatrix}\right]$ (Eq. \ref{eq:z_def}) is the output of $\Psi$ driven by the inputs $(v,l)$.
\end{definition}
We use the notation $\Delta \in$ IQC$(\Pi)$ and $\Delta \in $ SoftIQC$(\Psi, M)$ to indicate that $\Delta$ satisfies the corresponding frequency domain and soft IQC, \revise{meaning that given any $v$, the output $l$ of $\Delta$ must be such that \eqref{eq:freqQC} and \eqref{eq:softIQC} hold, respectively}. By Parseval's theorem \cite{Zhou:1996}, frequency domain and time domain soft IQCs are equivalent. \revise{Specifically, if $\Delta \in$ IQC$(\Pi)$ then $\Delta \in$ SoftIQC$(\Psi,M)$ for any factorization $\Pi = \Psi^\sim M \Psi$ with $\Psi$ stable.  Conversely if $\Delta \in$ SoftIQC$(\Psi,M)$ then $\Delta \in$ IQC$(\Psi^\sim M \Psi)$ as well. It also follows that $\Delta \in$ HardIQC$(\Psi,M)$ implies $\Delta \in$ IQC$(\Psi^\sim M \Psi)$.  However, $\Delta \in$ IQC$(\Pi)$ does not imply, for general factorizations, that $\Delta \in$ HardIQC$(\Psi,M)$.} As a result, soft IQCs are always available while hard ones are not, which necessitates the use of soft IQCs in the dissipation inequality. Next, we give one example of uncertainty and its corresponding soft IQC.

\begin{example}\label{ex:IQC_example_para}
	Consider the set $\mathcal{S}_3$ of real constant parametric uncertainties with given norm bound $\sigma >0$, i.e. $\Delta \in \mathcal{S}_3$, if $l(t) = \Delta(v(t)) = \delta_{TI} v(t)$ with $ \vert \delta_{TI} \vert \leq \sigma$. From \cite{Megretski:97}, the frequency domain filter is chosen as $\Pi_\delta = \left[\begin{smallmatrix}\sigma^2\Pi_{11}(j\omega) & \Pi_{12}(j\omega) \\ \Pi_{12}^*(j\omega) & -\Pi_{11}(j\omega) \end{smallmatrix} \right]$, where $\Pi_{11}(j\omega)=\Pi_{11}^*(j\omega) \ge 0$ and $\Pi_{12}(j\omega)=-\Pi_{12}^*(j\omega)$ for all $\omega$. A soft IQC factorization for $\Pi_\delta$ is $\Psi= \left[\begin{smallmatrix} \Psi_{11}^{d,m} & 0 \\ 0 & \Psi_{11}^{d,m}\end{smallmatrix}\right]$, where $\Psi_{11}^{d,m}$ is defined in \eqref{eq:filter_choice}, and $M_{DG} = \left[\begin{smallmatrix} \sigma^2 M_{11} & M_{12} \\ M_{12}^\top & -M_{11} \end{smallmatrix} \right]$, where decision matrices are subject to $M_{11} = M_{11}^\top$, $M_{12} = -M_{12}^\top$ and $\Psi^{d,m\sim}_{11} M_{11}\Psi_{11}^{d,m} \ge 0$. The constraints $\Psi^{d,m\sim}_{11} M_{11}\Psi_{11}^{d,m} \ge 0$ can be enforced by a KYP LMI \cite{RANTZER19967}. Notice that $\delta_{TI}$ is a special case of the perturbation considered in Example \ref{ex:IQC_example_LTI}, and  thus $\delta_{TI} \in$ HardIQC$(\Psi, M_D)$ as well. However, since $M_D$ is a special case of $M_{DG}$ with $M_{12} \equiv 0$, the reachability analysis using $(\Psi, M_{DG})$ can be less conservative than using $(\Psi, M_D)$. 
\end{example}

 Soft IQCs are constraints that hold over the infinite time horizon and hence they cannot be directly incorporated in the analysis based on finite-horizon dissipation inequalities. The following Lemma is a remedy for this issue: it provides a lower-bound for soft IQCs on finite horizons then enabling their use for reachability analysis. This in turn enables us to: (i) conduct reachability analysis when the hard IQC factorization does not exist; (ii) reduce conservatism resulting from the hard IQC factorization when it exists, as discussed in Example \ref{ex:IQC_example_para} . 
\begin{lemma} \label{lemma:iqc_bound} (\cite{Fetzer:2018}) Let $\Psi \in \mathbb{RH}_{\infty}^{n_z \times (n_v+n_l)}$ and $M \in \mathbb{S}^{n_z}$ be given. Define $\Pi := \Psi^{\sim}M \Psi$. If $\Pi_{22}(j\omega) < 0 \ \forall \omega$, then\footnote{The notation $\Pi_{22}$ refers to the partitioning $\Pi = \left[ \begin{smallmatrix}\Pi_{11} & \Pi_{12} \\ \Pi_{12}^\sim & \Pi_{22}\end{smallmatrix} \right]$ conformably with the dimensions of $v$ and $l$.}
	\begin{itemize}
		\item $D_{\psi2}^\top M D_{\psi2} < 0$ and there exists a  $Y_{22} \in \mathbb{S}^{n_\psi}$ satisfying
		 \begin{align}\label{eq:cond_Y22}
		 \revise{KYP(Y_{22}, A_\psi, B_{\psi2}, C_\psi, D_{\psi2},M) < 0.}
		 \end{align}
		\item If $\Delta \in SoftIQC(\Psi, M)$ then for all $T \ge 0$, $v \in \mathcal{L}_{2e}^{n_v}$ and $l = \Delta(v)$,
		\begin{align}
		\int_0^T z(t)^\top M z(t) dt \ge - x_\psi(T)^\top Y_{22} x_\psi(T) \label{eq:IQC_lowerbound}
		\end{align}
		for any $Y_{22} \in \mathbb{S}^{n_\psi}$ satisfying \eqref{eq:cond_Y22}.
	\end{itemize}
\end{lemma}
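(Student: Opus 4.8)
The plan is to establish the two bullets in turn. \emph{First bullet.} The hypothesis $\Pi_{22}(j\omega) < 0$ for all $\omega$ is precisely the frequency-domain inequality attached to the channel of $\Psi$ driven by $l$ alone: writing $\Psi_2(s) := C_\psi(sI - A_\psi)^{-1}B_{\psi2} + D_{\psi2}$ for the transfer matrix from $l$ to $z$ (obtained by setting $v = 0$ in \eqref{eq:filter_sys}--\eqref{eq:z_def}), one has $\Pi_{22} = \Psi_2^\sim M \Psi_2$. Since $\Psi \in \mathbb{RH}_{\infty}$, the matrix $A_\psi$ is Hurwitz, and letting $\omega \to \infty$ in $\Pi_{22}(j\omega) < 0$ yields the feedthrough condition $D_{\psi2}^\top M D_{\psi2} < 0$. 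I would then invoke the strict KYP (bounded-real-type) lemma, which converts a negative-definite frequency response on the whole imaginary axis into a strict LMI: under controllability/minimality of $(A_\psi, B_{\psi2})$, there exists $Y_{22} = Y_{22}^\top$ with $KYP(Y_{22}, A_\psi, B_{\psi2}, C_\psi, D_{\psi2}, M) < 0$, which is exactly \eqref{eq:cond_Y22}.

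\emph{Second bullet, setup.} Fix $T \ge 0$, $v \in \mathcal{L}_{2e}^{n_v}$ and $l = \Delta(v)$. The key device is to apply the soft IQC not to $v$ but to its truncation $v_T$. Let $l' := \Delta(v_T)$; since $v_T \in \mathcal{L}_{2}^{n_v}$ and $\Delta$ is bounded, $l' \in \mathcal{L}_{2}^{n_l}$, so \eqref{eq:softIQC} applies to the pair $(v_T, l')$ and its filter output $z'$, giving $\int_{t_0}^\infty (z')^\top M z'\, dt \ge 0$. By causality of $\Delta$, $l'$ agrees with $l$ on $[t_0, T]$; as $v_T$ agrees with $v$ there and $\Psi$ is a causal LTI system started from $x_\psi(t_0) = 0$, its state and output satisfy $x_\psi'(t) = x_\psi(t)$ and $z'(t) = z(t)$ on $[t_0, T]$, and in particular $x_\psi'(T) = x_\psi(T)$.

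\emph{Second bullet, tail estimate and conclusion.} Splitting the infinite integral at $T$ and using $z' = z$ on $[t_0, T]$ gives $\int_{t_0}^T z^\top M z\, dt \ge - \int_T^\infty (z')^\top M z'\, dt$, so it remains to bound the tail from above. On $(T, \infty)$ the input $v_T$ vanishes, hence $z' = C_\psi x_\psi' + D_{\psi2} l'$ and $\dot x_\psi' = A_\psi x_\psi' + B_{\psi2} l'$; pre- and post-multiplying \eqref{eq:cond_Y22} by $[x_\psi'; l']$ yields the dissipation inequality $\frac{d}{dt}\big((x_\psi')^\top Y_{22} x_\psi'\big) + (z')^\top M z' \le 0$ for $t > T$. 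Integrating from $T$ to $\tau$ and letting $\tau \to \infty$---using that $v_T, l' \in \mathcal{L}_2$ together with $A_\psi$ Hurwitz force $x_\psi'(t) \to 0$---produces $\int_T^\infty (z')^\top M z'\, dt \le x_\psi(T)^\top Y_{22} x_\psi(T)$. Combining the two displays gives \eqref{eq:IQC_lowerbound}; because the tail bound uses only the LMI \eqref{eq:cond_Y22}, the estimate holds for \emph{every} feasible $Y_{22}$, as claimed.

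\emph{Main obstacle.} I expect the delicate part to be the causality bookkeeping together with the limit $\tau \to \infty$: one must argue carefully that truncating the input to $v_T$ leaves the filter trajectory (and thus $x_\psi(T)$) unchanged on $[t_0, T]$, and that the tail contribution is controlled, which hinges on $l' \in \mathcal{L}_2$ (boundedness of $\Delta$) and $x_\psi'(t) \to 0$ (stability of $\Psi$). A secondary point is the first bullet's appeal to the KYP lemma, where the realization of $\Psi_2$ must meet the sign/controllability hypotheses needed to pass from a strict frequency-domain inequality to a strict LMI.
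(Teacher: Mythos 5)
The paper itself does not prove this lemma---it is imported directly from \cite{Fetzer:2018}---so there is no in-paper argument to compare against. Judged on its own merits, your proof is correct, and it is in fact the standard argument behind the cited result. The truncation device is exactly the right idea: apply the soft IQC \eqref{eq:softIQC} to $v_T$ rather than $v$, use causality of $\Delta$ and of the zero-initialized filter $\Psi$ to conclude $z'=z$ on $[t_0,T]$ and $x_\psi'(T)=x_\psi(T)$, and then absorb the tail $\int_T^\infty (z')^\top M z'\,dt$ using the pointwise dissipation inequality obtained by multiplying \eqref{eq:cond_Y22} on both sides by $[x_\psi';\,l']$. Your limiting argument is also sound: $l'=\Delta(v_T)\in\mathcal{L}_2$ (boundedness of $\Delta$) together with $A_\psi$ Hurwitz forces $x_\psi'(\tau)\to 0$, so the integrated inequality yields $\int_T^\infty (z')^\top M z'\,dt \le x_\psi(T)^\top Y_{22}\, x_\psi(T)$, which combined with $\int_{t_0}^\infty (z')^\top M z'\,dt \ge 0$ gives \eqref{eq:IQC_lowerbound} for every feasible $Y_{22}$.

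Two technical points in your first bullet deserve tightening. First, letting $\omega\to\infty$ in $\Pi_{22}(j\omega)<0$ only gives $D_{\psi2}^\top M D_{\psi2}\le 0$: a pointwise strict FDI on the finite imaginary axis need not remain strict in the limit. For instance, with $\Psi_2(s)=1/(s+1)$ and $M=-1$ one has $\Pi_{22}(j\omega)=-1/(1+\omega^2)<0$ for every finite $\omega$, yet $D_{\psi2}^\top M D_{\psi2}=0$ and no $Y_{22}$ can satisfy the strict LMI \eqref{eq:cond_Y22} (its $(2,2)$ block is identically zero). The statement is therefore to be read with the FDI holding on the extended imaginary axis, $\omega=\infty$ included, and under that convention both the feedthrough condition and the KYP step go through. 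Second, your hedge about controllability is unnecessary: in Rantzer's form of the KYP lemma, the equivalence between the \emph{strict} frequency-domain inequality and the \emph{strict} LMI holds with no controllability or minimality assumption on $(A_\psi,B_{\psi2})$, so the ``main obstacle'' you flag there is not actually an obstacle.
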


\revise{Lemma~\ref{lemma:iqc_bound} is valid for multipliers that satisfy $\Pi_{22} >0$. Multipliers satisfying the non-strict conditions $\Pi_{22} \ge 0$ can be handled by a perturbation argument \cite{Pete_IQC}.} Based on the lemma given above, the following theorem considers the analysis for the interconnection $F_u(G,\Delta)$ with $\Delta$ that has a soft IQC factorization.
\begin{theorem} \label{thm:soft_IQC}
	\revise{Let $G$ be a nonlinear system defined by \eqref{eq:sysG4IQC},  and $\Delta: \mathcal{L}^{n_v}_{2e} \rightarrow \mathcal{L}^{n_l}_{2e}$ be a bounded and causal operator. Let Assumption~\ref{ass:ass1} hold.} Additionally, assume \revise{(i) $F_u(G,\Delta)$ is well-posed,} (ii) $\Delta \in$ SoftIQC$(\Psi, M)$, with $\Psi$ and $M$ given, \revise{(iii) $\Pi:= \Psi^\sim M \Psi$ satisfying $\Pi_{22} <0 \ \forall \omega$,} and (iv) all the trajectories of the extended system start from $\mathcal{X}_0 \times \{0^{n_\psi}\}$. For some \revise{$F$, $H$ defined in \eqref{eq:FH_def}}, time interval $[t_0, T]$, local region $\mathcal{X}_l \subset \mathbb{R}^{n_G}$, set of initial conditions $\mathcal{X}_0 \subset \mathbb{R}^{n_G}$, disturbance bound $R$, function $h$, and set of uncertain parameters $\mathcal{D}$, function $q:\R^{n_G} \rightarrow \R$, and $\alpha \in \R$, suppose there exists a $\mathcal{C}^1$ function $V : \mathbb{R} \times \mathbb{R}^{n} \rightarrow \mathbb{R}$, and a matrix $Y_{22} \in \mathbb{S}^{n_{\psi}}$ satisfying \eqref{eq:cond_Y22}, such that the following constraints hold
	\begingroup
	\allowdisplaybreaks
	\begin{subequations} \label{eq:cond_softIQC}
	\begin{align}
	&\frac{\partial V(t,x)}{\partial t} + \frac{\partial V(t,x)}{\partial x}F(t,x,l,w,\delta) + z^\top M z \leq w^\top w, \ \forall (x, t,l,w,\delta) \in  \mathcal{X}_l  \times \mathbb{R}^{n_{\psi}} \times [t_0, T] \times \mathbb{R}^{n_l} \times \mathbb{R}^{n_w} \times \mathcal{D}, \label{eq:constraintG1} \\
	&\mathcal{X}_0\times \{0^{n_\psi}\} \subseteq \{x \in \mathbb{R}^n : V(t_0, x) \leq 0\}, \label{eq:constraintG2} \\
	&\left\{x_G \in \mathbb{R}^{n_G} : \revise{\mathcal{V}(T, x)}  \leq R^2\right\} \subseteq \Omega_{\alpha}^{q}, \forall x_{\psi} \in \mathbb{R}^{n_{\psi}}, \label{eq:constraintG3} \\
	&\left\{x_G \in \mathbb{R}^{n_G} : \revise{\mathcal{V}(t,x)}  \le R^2 h(t) \right\} \subseteq \mathcal{X}_l , \forall (t,x_\psi) \in [t_0, T] \times \mathbb{R}^{n_{\psi}}.  \label{eq:constraintG4} 
	\end{align}
	\end{subequations}
	\endgroup
	where \revise{$\mathcal{V} = V - x_\psi^\top Y_{22} x_\psi$, and $z$ is the output of the map $H$. Then all the trajectories of $F_u(G,\Delta)$ (defined by \eqref{eq:sysG4IQC} -- \eqref{eq:def_Delta}) starting from $x_G(t_0) \in \mathcal{X}_0$ satisfy $x_G(T) \in \Omega_{\alpha}^{q}$.} Therefore $\Omega_{\alpha}^q$ is an outer bound to $FRS(T;F_u(G,\Delta),t_0,\mathcal{X}_0,R,h,\mathcal{D})$ \eqref{def:FRS_Fu}.
\end{theorem}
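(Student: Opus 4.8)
The plan is to mirror the proof of Theorem~\ref{thm:hardIQCs}, replacing the single ingredient that fails for soft IQCs---the nonnegativity of $\int_{t_0}^{T} z^\top M z\,dt$---with the finite-horizon lower bound from Lemma~\ref{lemma:iqc_bound}. The shifted function $\mathcal{V} = V - x_\psi^\top Y_{22} x_\psi$ is engineered precisely to absorb the correction term $-x_\psi(T)^\top Y_{22} x_\psi(T)$ produced by that lemma, so the contradiction argument closes in terms of $\mathcal{V}$ exactly as it closed in terms of $V$ in the hard case. Well-posedness of $F_u(G,\Delta)$ guarantees, as before, that the extended-system signals $(x,v,l,z)$ lie in $\mathcal{L}_{2e}$, which is what Lemma~\ref{lemma:iqc_bound} requires; note also that hypothesis (iii), $\Pi_{22}<0$, is exactly the standing assumption of the lemma.

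First I would combine \eqref{eq:constraintG1} and \eqref{eq:constraintG4}: since \eqref{eq:constraintG4} certifies that $\mathcal{V}(t,x)\le R^2 h(t)$ implies $x_G\in\mathcal{X}_l$, the dissipation inequality \eqref{eq:constraintG1} holds along any trajectory while it remains in the $\mathcal{V}$-sublevel set $\{x:\mathcal{V}(t,x)\le R^2h(t)\}$. Arguing by contradiction, suppose some admissible $(x_0,w,\delta,l)$ yields a trajectory with $\mathcal{V}(T_1,x(T_1))>R^2h(T_1)$ for some $T_1$, and set $T_2:=\inf\{t:\mathcal{V}(t,x(t))>R^2h(t)\}$. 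The dissipation inequality then holds on $[t_0,T_2]$, and integrating gives
\begin{align}
V(T_2,x(T_2)) - V(t_0,x(t_0)) + \int_{t_0}^{T_2} z^\top M z\,dt \le \int_{t_0}^{T_2} w^\top w\,dt. \nonumber
\end{align}

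Next I would discharge the three terms. By \eqref{eq:constraintG2} together with $x_\psi(t_0)=0$ we have $V(t_0,x(t_0))\le 0$; by \eqref{eq:w_rate} the right-hand side is strictly below $R^2h(T_2)$; and by Lemma~\ref{lemma:iqc_bound}, using the matrix $Y_{22}$ satisfying \eqref{eq:cond_Y22}, the IQC integral obeys $\int_{t_0}^{T_2} z^\top M z\,dt \ge -x_\psi(T_2)^\top Y_{22}x_\psi(T_2)$. Substituting these yields $V(T_2,x(T_2)) - x_\psi(T_2)^\top Y_{22} x_\psi(T_2) < R^2 h(T_2)$, that is $\mathcal{V}(T_2,x(T_2))<R^2h(T_2)$, contradicting the definition of $T_2$, at which continuity of $\mathcal{V}$ and $h$ forces $\mathcal{V}(T_2,x(T_2))=R^2h(T_2)$. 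Hence no such $T_1$ exists, so $\mathcal{V}(t,x(t))\le R^2h(t)$ for all $t\in[t_0,T]$; evaluating at $t=T$ with $h(T)=1$ gives $\mathcal{V}(T,x(T))\le R^2$, and \eqref{eq:constraintG3} then delivers $x_G(T)\in\Omega_\alpha^q$.

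The main obstacle is making the soft-IQC bound legitimate at the stopping time $T_2$ and confirming the exact cancellation. Lemma~\ref{lemma:iqc_bound} is stated for every finite horizon and for $\mathcal{L}_{2e}$ inputs, so it applies verbatim with $T_2$ in place of $T$; the care required is in verifying that $T_2$ is a genuine first-exit time---so the dissipation inequality is valid on the closed interval $[t_0,T_2]$ by continuity---and that the $-x_\psi^\top Y_{22}x_\psi$ term delivered by the lemma is identically the one built into $\mathcal{V}$, which is what turns the $V$-inequality into the needed $\mathcal{V}$-inequality. Once this alignment is checked, the remainder is bookkeeping identical to Theorem~\ref{thm:hardIQCs}.
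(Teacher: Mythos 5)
Your proposal is correct and follows essentially the same route as the paper's own proof: the identical first-exit-time contradiction argument, integrating the dissipation inequality obtained from \eqref{eq:constraintG1} and \eqref{eq:constraintG4} over $[t_0,T_2]$, invoking Lemma~\ref{lemma:iqc_bound} (justified by well-posedness giving $\mathcal{L}_{2e}$ signals and hypothesis (iii)) to lower-bound the IQC integral at $T_2$, and letting the $-x_\psi^\top Y_{22} x_\psi$ term convert the resulting $V$-inequality into the $\mathcal{V}$-inequality that contradicts $\mathcal{V}(T_2,x(T_2)) = R^2 h(T_2)$. Your added remark that continuity of $\mathcal{V}$ and $h$ is what forces equality at the stopping time $T_2$ makes explicit a step the paper leaves implicit, but the argument is the same.
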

\revise{
\begin{proof}
	By assumption that $F_u(G,\Delta)$ is well-posed, the signals $(x,v,l,z)$ generated for the extended system for the input $w \in \mathcal{L}_2^{n_w}$ are $\mathcal{L}_{2e}$ signals. By combining \eqref{eq:constraintG1} and \eqref{eq:constraintG4} we have the following dissipation inequality:
	\begin{align}
	&\frac{\partial V(t,x)}{\partial t} + \frac{\partial V(t,x)}{\partial x}F(t,x,l,w,\delta) + z^\top M z \leq w^\top w, \nonumber \\
	&\quad \quad\quad\quad \quad \quad\quad\quad\quad\quad\quad\quad\quad\quad\forall (x, t,l,w, \delta) \ \text{s.t.} \ x \in \Omega^{\mathcal{V}}_{t,R^2h(t)}, t \in [t_0, T], l \in \mathbb{R}^{n_l}, w \in \mathbb{R}^{n_w}, \delta \in \mathcal{D}. \label{eq:diss_softIQC}
	\end{align}
	Since \eqref{eq:diss_softIQC} only holds on the set $\Omega_{t,R^2h(t)}^{\mathcal{V}}$, we need to first prove that all the states starting from $\mathcal{X}_0 \times \{0^{n_\psi}\}$ won't leave $\Omega_{t,R^2h(t)}^{\mathcal{V}}$, for all $t \in [t_0, T]$. Assume there exist a time instance $T_1 \in [t_0,T]$, $x_0 \in \mathcal{X}_0 \times \{0^{n_\psi}\}$, and signals $w$ satisfying \eqref{eq:w_rate}, $\delta(t) \in \mathcal{D}$, $l(t) \in \R^{n_l}$, such that a trajectory of the extended system starting from $x(t_0) = x_0$ satisfies ${\mathcal{V}}(T_1,x(T_1)) > R^2 h(T_1)$. Define $T_2 = \inf_{\mathcal{V}(t,x(t))>R^2 h(t)} t$, and integrate \eqref{eq:diss_softIQC} over $[t_0,T_2]$:
	\begin{align}
	V(T_2,x(T_2)) - V(t_0,x(t_0)) + \int_{t_0}^{T_2} z(t)^\top M z(t) dt &\leq  \int_{t_0}^{T_2} w(t)^\top w(t) dt. \nonumber  \\
	\intertext{By assumption $x_0 \in \mathcal{X}_0 \times \{0^{n_\psi}\}$, it follows from constraint \eqref{eq:constraintG2} that $V(t_0,x(t_0))\leq 0$. Combining it with $w$ satisfying \eqref{eq:w_rate} to show}
	V(T_2,x(T_2)) + \int_{t_0}^{T_2} z(t)^\top M z(t) dt & <  R^2h(T_2). \\
	\intertext{It follows from Lemma~\ref{lemma:iqc_bound}, $\Delta \in $ SoftIQC$(\Psi,M)$, $\Pi_{22} < 0 \ \forall \omega$ and $Y_{22}$ satisfies \eqref{eq:cond_Y22} that \eqref{eq:IQC_lowerbound} holds for the $\mathcal{L}_{2e}$ signals $(v,l,z)$, and thus}
	 V(T_2,x(T_2)) - x_\psi(T_2)^\top Y_{22} x_\psi(T_2) & < R^2h(T_2). \label{eq:contradict}
	\end{align}
	Thus \eqref{eq:contradict} is a contradiction, since $V(T_2,x(T_2)) - x_\psi(T_2)^\top Y_{22} x_\psi(T_2) = \mathcal{V}(T_2,x(T_2)) = R^2h(T_2)$. Therefore there doesn't exist a $T_1 \in [t_0,T]$, such that $x(T_1) \notin \Omega_{T_1,R^2 h(T_1)}^{\mathcal{V}}$. As a result, for all $x(t_0) \in \mathcal{X}_0 \times \{0^{n_\psi}\}$, we have $x(t) \in \Omega_{t,R^2h(t)}^{\mathcal{V}}$, for all $t \in [t_0, T]$, and thus $x(T) \in \Omega_{T,R^2}^{\mathcal{V}}$. Finally, it follows from \eqref{eq:constraintG3} that $x_G(T) \in \Omega_{\alpha}^q$. 
\end{proof}

\begin{remark}
	The use of soft IQCs requires some care as they are only defined in the frequency domain for $\mathcal{L}_2$ inputs and yet the analysis must be performed using $\mathcal{L}_{2e}$ signals (to prevent circular arguments). Section~\ref{sec:reach_perturb} is restricted to the use of hard IQCs for which the time-domain IQC holds over finite time horizons.  This removes the technical details associated with soft IQCs. This restricts the analysis to IQCs that can be parameterized so that they are hard. In Section~\ref{sec:soft_IQC}, however, analysis conditions are derived based on soft IQCs. The issues related to soft IQCs are resolved by constructing a finite horizon lower bound valid for $\mathcal{L}_{2e}$ signals (Lemma~\ref{lemma:iqc_bound}). This lower bound is then incorporated in the reachability analysis in Theorem~\ref{thm:soft_IQC}. The proof of Theorem~\ref{thm:soft_IQC} demonstrates that that the reachability analysis uses the lower bound \eqref{eq:IQC_lowerbound} valid for $\mathcal{L}_{2e}$ signals $(v,l,z)$, instead of using \eqref{eq:softIQC}, which requires $(v,l,z)$ to be $\mathcal{L}_2$ signals.
	
	Note that the characterization of a frequency domain IQC as ``soft" vs. ``hard" depends on the factorization of the frequency domain multiplier.  The J-spectral factorization in \cite{Pete_IQC} always yields a ``hard" IQC for any frequency domain multiplier (although this may not be an ideal parameterization for numerical implmentations)
\end{remark}

By applying the generalized S-procedure \cite{Parrilo:00} to \eqref{eq:cond_softIQC}, and using $\alpha$ as the cost function, we obtain the following SOS problem, $\boldsymbol{sosopt_3(F, H, p, g, q, r_0, R, h, p_\delta, \Psi, \mathcal{M})}$:
\begingroup
\allowdisplaybreaks
\begin{subequations}
\begin{align} 
\min_{\alpha, \eta, s, V, M, Y_{22}, \epsilon_1, \epsilon_2} & \ \ \alpha \nonumber \\
\text{s.t.} \ \ \ \ \ 
& V \in \mathbb{R}[(t,x)], M \in \mathcal{M} \ \text{and} \ Y_{22} \in \mathbb{S}^{n_\psi} \ \text{satisfying} \ \eqref{eq:cond_Y22}, \nonumber \\
&s_5 - \epsilon_1 \in \Sigma[x], s_6 - \epsilon_2 \in \Sigma[(x,t)],\epsilon_1 > 0, \epsilon_2 > 0,  \nonumber \\
&s_4 \in \Sigma[x_G], \ s_7 \in \Sigma[(x,t)], \ s_i \in \Sigma[(x,w,l,t,\delta)], \ \forall i \in \{1,2,3\},  \nonumber \\
& -\bigg(\frac{\partial  V}{\partial t} + \frac{\partial  V}{\partial x} F + z^\top Mz- w^\top w \bigg)   + \revise{(p - \eta)}s_1 -s_2g - s_3 p_\delta  \in \Sigma[(x,w,l,t,\delta)],  \\
& -V\vert_{t = t_0, x = [x_G; 0^{n_\psi}]} + s_4 r_0 \in \Sigma[x_G],  \\
& -( q - \alpha)s_5 + \mathcal{V}\vert_{t = T}  - R^2 \in \Sigma[x],  \\
& - \revise{(p-\eta)}s_6 + \mathcal{V} - R^2h  - s_7 g \in \Sigma[(x,t)]. 
\end{align}
\end{subequations}
\endgroup
Compared with $sosopt_2$, the optimization $sosopt_3$ has one more decision matrix $Y_{22}$ and an associated KYP LMI convex constraint, and it can also be solved by using Algorithm~\ref{alg:sol_algo}.}

\section{Examples} \label{sec:examples}
A workstation with a 2.7 [GHz] Intel Core i5 64 bit processors and 8[GB] of RAM was used for performing all computations in the following examples. The SOS optimization problem is formulated and translated into SDP using the Sum-of-Squares module in Yalmip \cite{Yalmip:04} on MATLAB, and solved by the SDP solver Mosek \cite{Mosek:17}. Table \ref{tab:table} shows the degree of polynomials we chose, and the computation time it took for each example.

\begin{table}[H]
	\caption{Computation times for each example \label{tab:table}}
	\centering
	\begin{tabular}{ |M{5.2cm}|M{2cm}|M{2cm}|M{2cm}|M{2cm}|M{1.4cm}|}
		\hline
		Examples / sections & \# of $x$ &Degree of $f$ &Degree of $V$ &Degree of $s$ &Time[sec] \\
		\hline
		Section \ref{ex1}    &2 & 3 &8 & 6 & $6.1 \times 10^1$\\ \hline 
		Section \ref{sec:softhard_compare} &4&3&6&4&$1.1\times 10^2$ \\ \hline
		Section \ref{ex3}: GTM     &4 & 6 &8 &6 &$1.1 \times 10^3$\\ \hline
		Section \ref{ex4}: GTM with $w$     &4&3&8&6&$3.2 \times 10^3$ \\ \hline
		Section \ref{ex5}: GTM with $w$, $\delta$ &4&3&8&6& $5.0 \times 10^3$ \\ \hline
		Section \ref{ex6}: GTM with $w$, $\Delta$  &6&3&6&4&$8.2 \times 10^3$ \\ \hline
		Section \ref{ex7} &7&3&6&6&$3.7 \times 10^3$ \\ \hline
	\end{tabular}
\end{table}
The dynamics $f$ in the following examples are all time-invariant, but since our reachability analysis is addressed in finite-time horizon, we use time-varying storage functions.

\subsection{Van der Pol example} \label{sec:softhard_compare}
Consider the following Van der Pol oscillator dynamics in reverse time with time-invariant uncertain parameter $\delta_{TI} \in [-3, 3]$:
\begin{equation}
\begin{array}{llll}
&\dot{x}_1 = x_2(1 + 0.2 \delta_{TI}), \nonumber \\
&\dot{x}_2  = x_1 + (x_1^2 - 1)x_2. \nonumber 
\end{array}
\end{equation}
In this case $\delta_{TI}$ is treated as a perturbation, where $l = \Delta(v) = \delta_{TI} v$, and $v = 0.2 x_2$. As discussed in Example \ref{ex:IQC_example_para}, the time invariant uncertain parameter $\delta_{TI}$ satisfies both hard and soft IQCs: $\delta_{TI} \in$ HardIQC$(\Psi, M_D)$ and $\delta_{TI} \in$ SoftIQC$(\Psi, M_{DG})$, where the constraints for $M_D$ and $M_{DG}$ are given in Example~\ref{ex:IQC_example_LTI} and \ref{ex:IQC_example_para}, respectively. The robust reachability analysis is performed using both kinds of IQCs. In both cases, we use the same filter $\Psi$, and choose $d$ and $m$ from \eqref{eq:filter_choice} to be $d = 1$, $m = 4$, which correspond to $\Psi$ described by the following dynamics:
\begin{align}
&A_\psi = \left[\begin{smallmatrix}-4 & 0 \\ 0 & -4 \end{smallmatrix}\right], ~~ B_{\psi 1} = \left[\begin{smallmatrix}1 \\ 0 \end{smallmatrix}\right], ~~~~~~~~ B_{\psi 2} = \left[\begin{smallmatrix}0 \\ 1 \end{smallmatrix}\right], \nonumber \\
&C_\psi = \left[\begin{smallmatrix}0, 1, 0,0 \\ 0, 0, 0, 1\end{smallmatrix}\right]^\top, D_{\psi 1} = \left[\begin{smallmatrix}1, 0, 0,0 \end{smallmatrix}\right]^\top, D_{\psi 2} = \left[\begin{smallmatrix}0, 0, 1,0 \end{smallmatrix}\right]^\top. \nonumber 
\end{align}
Therefore the filter $\Psi$ introduces two filter states $x_\psi \in \R^2$ to the extended system. Take the time horizon as $[t_0, T] = [0, 1.5]$ and the initial set as $\mathcal{X}_0 = \{(x_1,x_2)~|~ x_1^2 + x_2^2 \leq 1\}$. Choose polynomials $q = p = 0.3150x_1^2 - 0.0976 x_1 x_2 
+ 0.0816 x_2^2 - 0.0023 x_1 
+ 0.0002 x_2$. The local region $\mathcal{X}_l$ is picked as $\Omega_{4}^{p}$. The optimal $\alpha$ computed using soft and hard IQCs are 1.21 and 1.60, respectively, which states the fact that the soft IQC achieves a less conservative outer bound and captures the nature of the uncertainty. In Fig~\ref{fig:vdp}, the simulation points $x(T)$ of the Van der Pol dynamics with the initial set $\mathcal{X}_0$, and with values of $\delta_{TI}$ randomly drawn from $[-3,3]$ are shown with green dots. We can see from Fig~\ref{fig:vdp} that the outer bound obtained using the soft IQC (shown with the black solid curve) is enclosed by the one computed using the hard IQC (shown with the purple dash-dotted curve). It also indicates that the outer bound obtained using the soft IQC is less conservative.
\begin{figure}[h]
\centering
\includegraphics[width=0.4\textwidth]{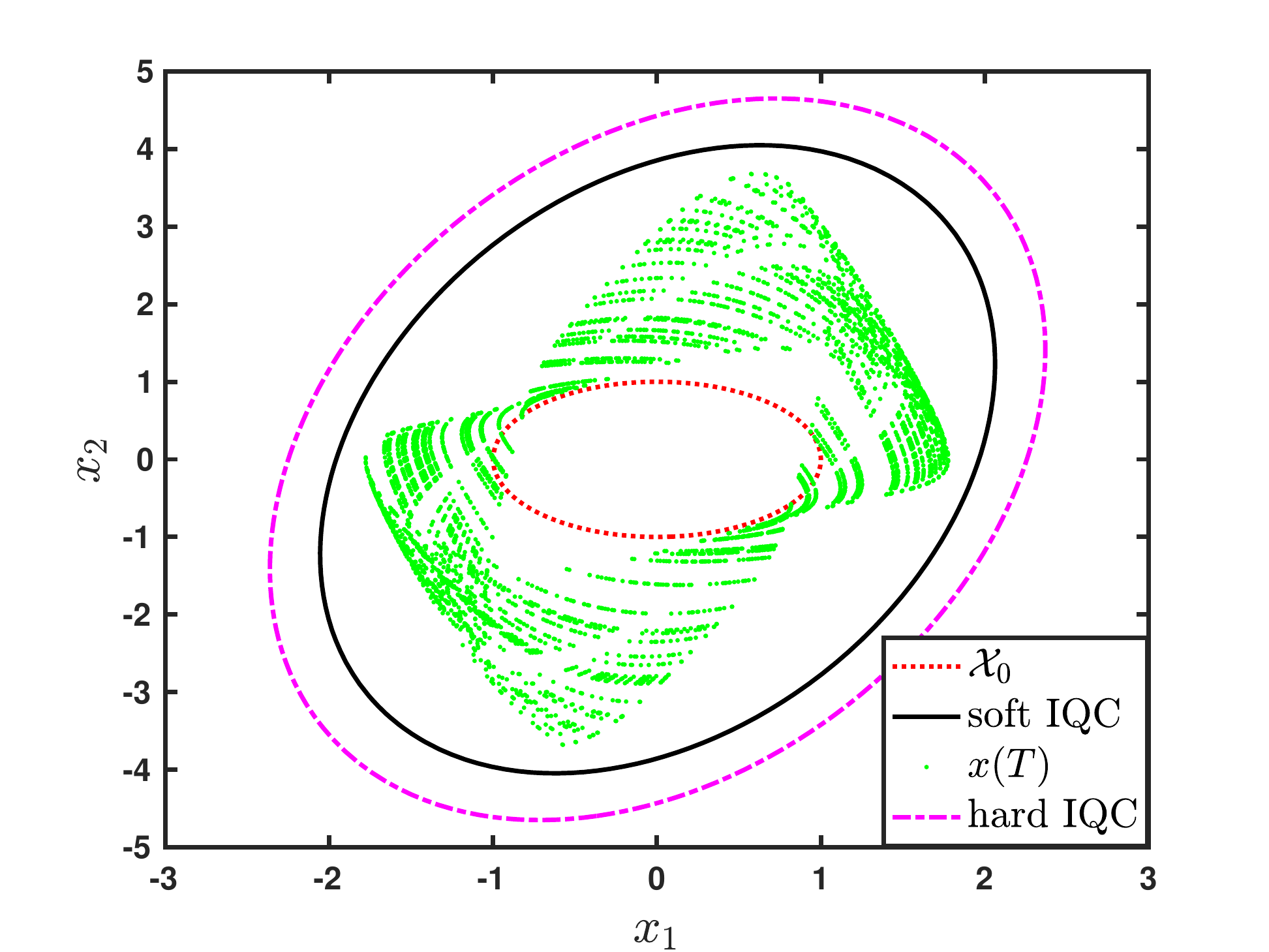}
\caption{Outer bounds using soft/hard IQCs and simulation points $x(T)$ at $T = 1.5$ under uncertain parameter, with the initial condition set $\mathcal{X}_0$}
\label{fig:vdp}    
\end{figure}

\subsection{NASA's Generic Transport Model (GTM) around straight and level flight condition} 
The GTM is a remote-controlled $5.5 \%$ scale commercial aircraft  \cite{Murch:07}. Its longitudinal model  \cite{Stevens:92} is
\begin{align}
\dot{x}_1 =& \frac{1}{m}(-D -mg\sin(x_4 - x_2) + T_x \cos(x_2) + T_z \sin(x_2)), \nonumber \\
\dot{x}_2 =& \frac{1}{m x_1}(-L +mg \cos(x_4 - x_2) - T_x \sin(x_2) + T_z \cos(x_2) + x_3), \nonumber \\
\dot{x}_3 =& \frac{M + T_m}{I_{yy}}, \label{eq:q} \\
\dot{x}_4 =& x_3, \nonumber 
\end{align}
\noindent where states $x_1$ to $x_4$ represent air speed (m/s), angle of attack (rad), pitch rate (rad/s) and pitch angle (rad) respectively. The control inputs are elevator deflection $u_{elev}$ (rad) and engine throttle $u_{th}$ (percent). The drag force $D$ (N), lift force $L$ (N), and aerodynamic pitching moment $M$ (Nm) are given by $D = \bar{q}SC_D(x_2, u_{elev}, \hat{q})$, $L=\bar{q}SC_L(x_2,u_{elev},\hat{q})$, and $M = \bar{q}S\bar{c}C_m(x_2, u_{elev}, \hat{q})$,where $\bar{q}:= \frac{1}{2}\rho x_1^2$ is the dynamic pressure (N/m$^2$), and $\hat{q}:=(\bar{c}/2x_1)x_3$ is the normalized pitch rate (unitless). $C_D, C_L,$ and $C_m$ are unitless
aerodynamic coefficients computed from look-up tables provided by NASA.

A degree-6 polynomial model, provided in \cite{Chakraborty:11}, is obtained after replacing all nonpolynomial terms with their polynomial approximations. The polynomial model takes the form $\dot{x} = f_6(x, u)$, where $x:= [x_1, x_2, x_3, x_4]^\top$ and $u = [u_{elev}, u_{th}]^\top$. The following straight and level flight condition is computed for this model: $x_{1,t} = 45$ m/s, $x_{2,t} = 0.04924$ rad, $x_{3,t} = 0$ rad/s, $x_{4,t} = 0.04924$ rad, with $u_{elev,t} = 0.04892$ rad, and $u_{th,t} = 14.33 \%$. The subscript $t$ denotes a trim value. A polynomial closed-loop model, denoted as $\dot{x} = f_6(x)$, is obtained by holding $u_{th}$ at its trim value, applying a proportional pitch rate feedback $u_{elev} = K_q x_3 + u_{elev,t} = 0.0698x_3 + u_{elev,t}$.

\subsubsection{Analysis for the GTM} \label{ex3}
Reachability analysis is carried out on $\dot{x} = f_6(x)$ around its trim point. The set of initial conditions $\mathcal{X}_0 = \{x\in \mathbb{R}^4| (x - x_t)^\top C^{-1} (x - x_t) -1 \le 0\}$ is a 4-dimensional ellipsoid inside the region of attraction, where $C = diag(20^2, (20\pi/180)^2,$ $(50\pi/180)^2, (20\pi/180)^2)$, $x_t$ is the trim point. Take the local region $\mathcal{X}_l = \{x \in \mathbb{R}^4 | (x - x_t)^\top C_l^{-1} (x - x_t)-1 \leq 0  \}$, where $C_l = diag(30^2, (30\pi/180)^2, (75\pi/180)^2, (30\pi/180)^2)$. $\Omega_1^q$ is chosen as the minimum volume ellipsoid containing all the simulation points at terminal time.

To improve the numerical conditioning, we define the scaled states $x_{scl} = N_{scl}^{-1}x$, where we set $N_{scl} := diag(20,$ $20\pi/180,$ $50\pi/180,$ $20\pi/180)$, since $20$ m/s, $20 \pi /180 $ rad, $50 \pi/180$ rad/s, $20 \pi /180$ rad are farthest distances observed in simulation that each state can be away from their trim point value given the initial condition set $\mathcal{X}_0$. Then we have the dynamics for the scaled states: $\dot{x}_{scl} = N_{scl}^{-1}f_6(N_{scl}x_{scl})$, and this scaled dynamics is the one we will use in the SOS optimization problem. Before scaling, the coefficients of $f_6(x)$ vary from $1.6 \times 10^{-5}$ to $4.5 \times 10^1$; after scaling, they vary from $4.5 \times 10^{-3}$ to $1.8 \times 10^1$. Before plugging the polynomial functions $r_0, q, p$ into the SOS optimization problem, the parameters were scaled accordingly.

Figure \ref{fig:GTMx2x3} and Figure \ref{fig:GTMx1x4} show the outer bound of reachable set in $x_2 - x_3$ space and $x_1 - x_4$ space respectively, at different simulation times. We can observe that $\Omega_{T,0}^V$ (black curve) contains all the simulation points $x(T)$ (green points) at each terminal time $T$.
%\begin{figure}[h]
%	\centering
%	\begin{subfigure}[b]{0.3\columnwidth}
%		\centering
%		\includegraphics[width=1\columnwidth]{figures/x2x3point1.eps}
%		\caption{T = 0.1 sec}  
%		% \label{fig:mean and std of net14}
%	\end{subfigure}
%	\hspace{1in}
%	\begin{subfigure}[b]{0.3\columnwidth}  
%		\centering 
%		\includegraphics[width=1\columnwidth]{figures/x2x3point2.eps}
%		\caption{T = 0.2 sec}
%		%\label{fig:mean and std of net24}
%	\end{subfigure}
%	\vfill
%	\begin{subfigure}[b]{0.3\columnwidth}   
%		\centering 
%		\includegraphics[width=1\columnwidth]{figures/x2x3point3.eps}
%		\caption{T = 0.3 sec} 
%		%\label{fig:mean and std of net34}
%	\end{subfigure}
%	\hspace{1in}
%	\begin{subfigure}[b]{0.3\columnwidth}   
%		\centering 
%		\includegraphics[width=1\columnwidth]{figures/x2x3point4.eps}
%		\caption{T = 0.4 sec}
%		%\label{fig:mean and std of net44}
%	\end{subfigure}
%	\caption{Over-approximations for GTM model in $x_2 - x_3$ plane. }
%	\label{fig:GTMx2x3}
%\end{figure}
\begin{figure}[H]
	\centering
	\includegraphics[width=0.7\textwidth]{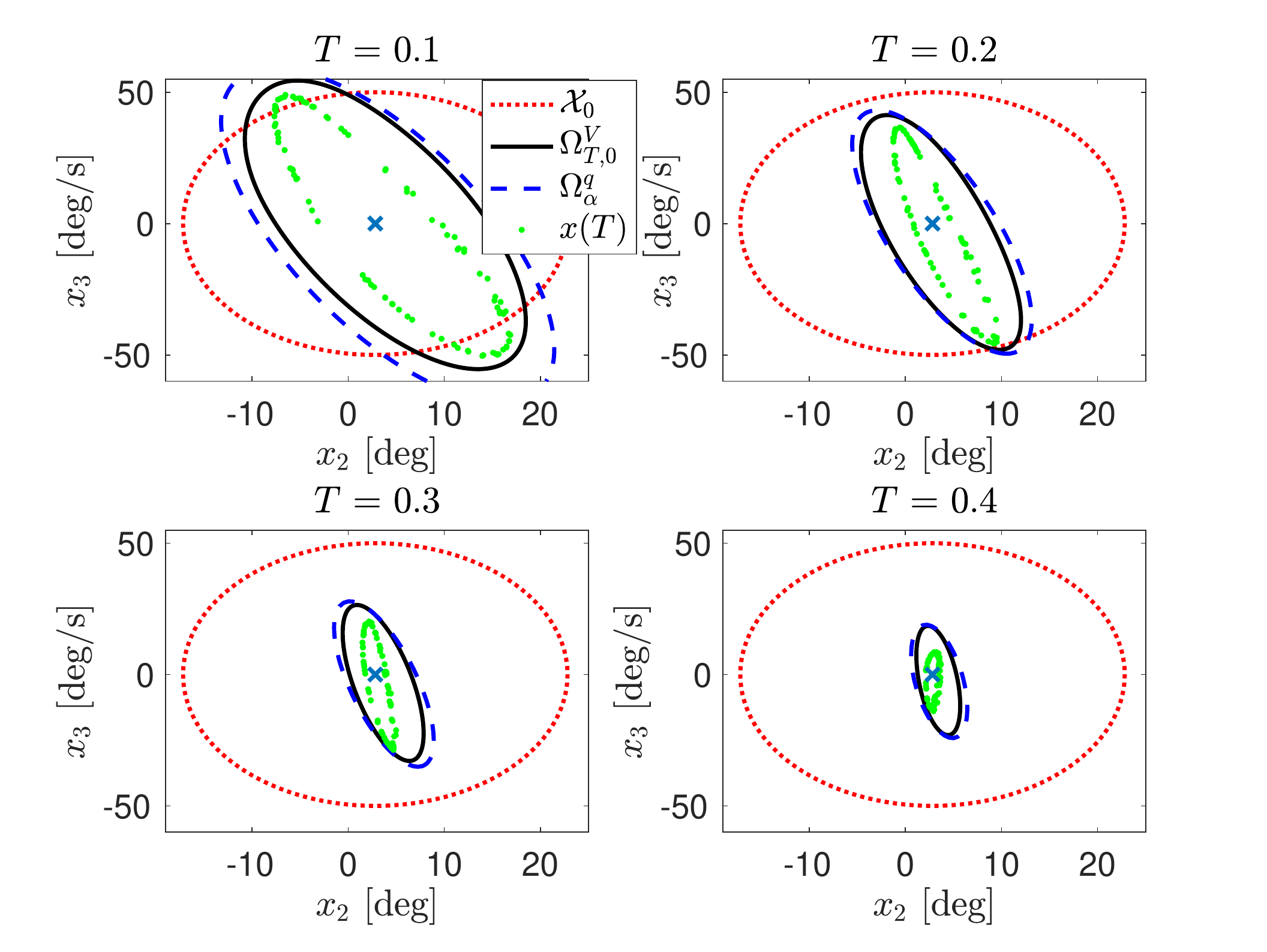}
	\caption{Outer bounds for GTM model in $x_2 - x_3$ plane.}
	\label{fig:GTMx2x3}    
\end{figure}
\begin{figure}[H]
	\centering
	\includegraphics[width=0.7\textwidth]{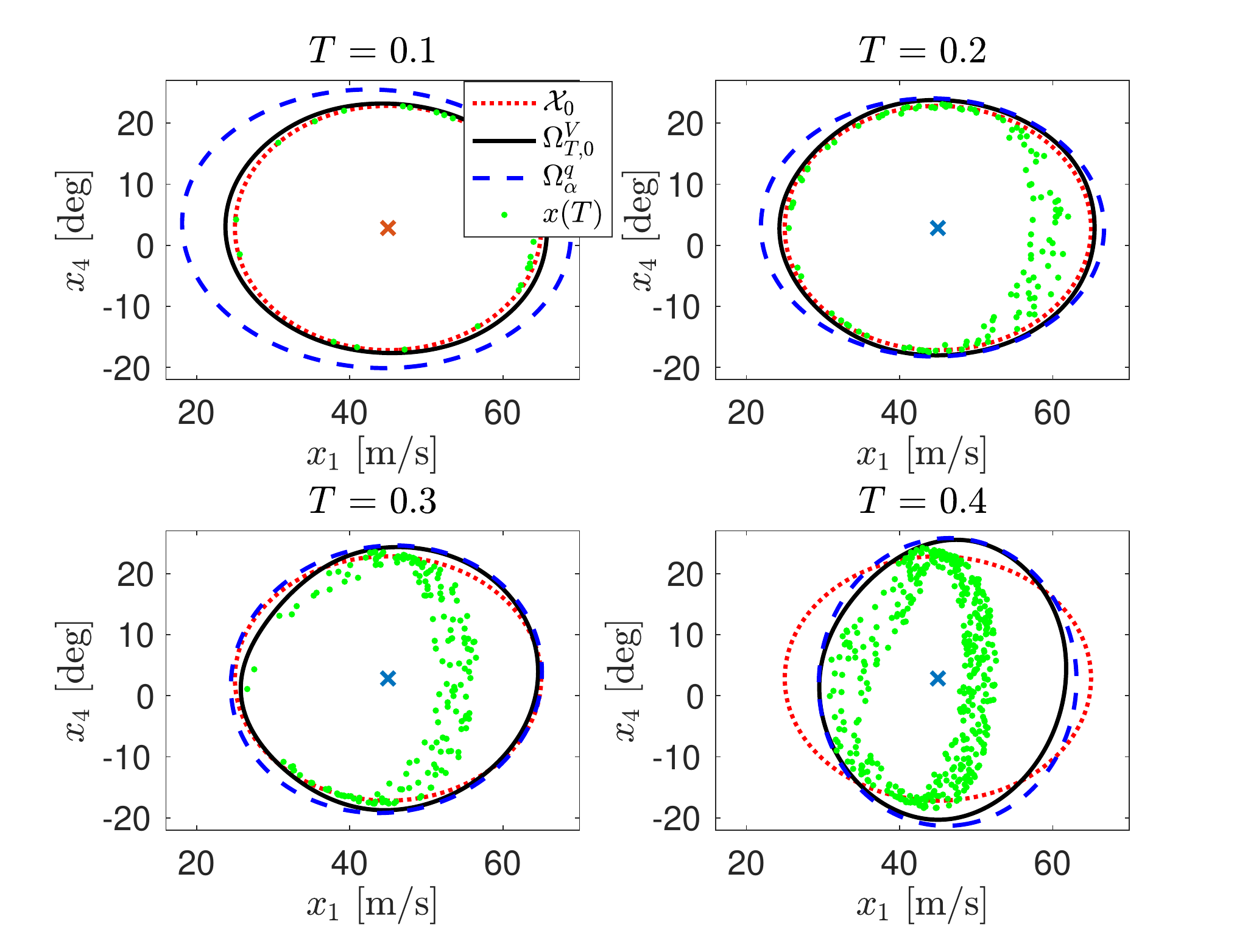}
	\caption{Outer bounds for GTM model in $x_1 - x_4$ plane.}
	\label{fig:GTMx1x4}    
\end{figure}

%\begin{figure}[h]
%	\centering
%	\begin{subfigure}[b]{0.3\textwidth}
%		\centering
%		\includegraphics[width=\textwidth]{figures/x1x4point1.eps}
%		\caption{T = 0.1 sec}  
%		\label{fig:mean and std of net14}
%	\end{subfigure}
%	\hspace{1in}
%	\begin{subfigure}[b]{0.3\textwidth}  
%		\centering 
%		\includegraphics[width=\textwidth]{figures/x1x4point2.eps}
%		\caption{T = 0.2 sec}
%		\label{fig:mean and std of net24}
%	\end{subfigure}
%	% \vskip\baselineskip
%	\begin{subfigure}[b]{0.3\textwidth}   
%		\centering 
%		\includegraphics[width=\textwidth]{figures/x1x4point3.eps}
%		\caption{T = 0.3 sec} 
%		\label{fig:mean and std of net34}
%	\end{subfigure}
%	\hspace{1in}
%	\begin{subfigure}[b]{0.3\textwidth}   
%		\centering 
%		\includegraphics[width=\textwidth]{figures/x1x4point4.eps}
%		\caption{T = 0.4 sec}
%		\label{fig:mean and std of net44}
%	\end{subfigure}
%	\caption{Over-approximations for GTM model in $x_1 - x_4$ plane. }
%	\label{fig:GTMx1x4}
%\end{figure}

\subsubsection{GTM with $\mathcal{L}_2$ disturbance}  \label{ex4}
To save computation time, reachability analysis is conducted on a 4-state degree-3 model obtained from the 4-state degree-6 model, with the same initial condition set $\mathcal{X}_0$ as that from the previous section. But an input disturbance $w$ at the elevator channel is taken into consideration this time. The control input becomes $u_{elev} = K_q x_3 + u_{elev,t} + w = 0.0698x_3 + u_{elev,t} + w$. Figure \ref{fig:disturbance} shows outer bounds at time $T = 0.4$ s with disturbances of different $\mathcal{L}_2$ bounds.
%\begin{figure}[H]
%	\centering
%	\begin{subfigure}[b]{0.3\textwidth}
%		\centering
%		\includegraphics[width=1.25\textwidth]{figures/disturbancex2x3.eps}
%		\caption{$\alpha - q$ plane}  
%		\label{fig:mean and std of net14}
%	\end{subfigure}
%	\hspace{1in}
%	\begin{subfigure}[b]{0.3\textwidth}  
%		\centering 
%		\includegraphics[width=0.83\textwidth]{figures/disturbancex1x4.eps}
%		\caption{$U - \theta$ plane}
%	\end{subfigure}
%	\caption{Over-approximations for GTM model at $T = 0.4$ sec with $\mathcal{L}_2$ disturbances. }
%	\label{fig:disturbance}    
%\end{figure}

\begin{figure}[h]
	\centering
	\includegraphics[width=0.7\textwidth]{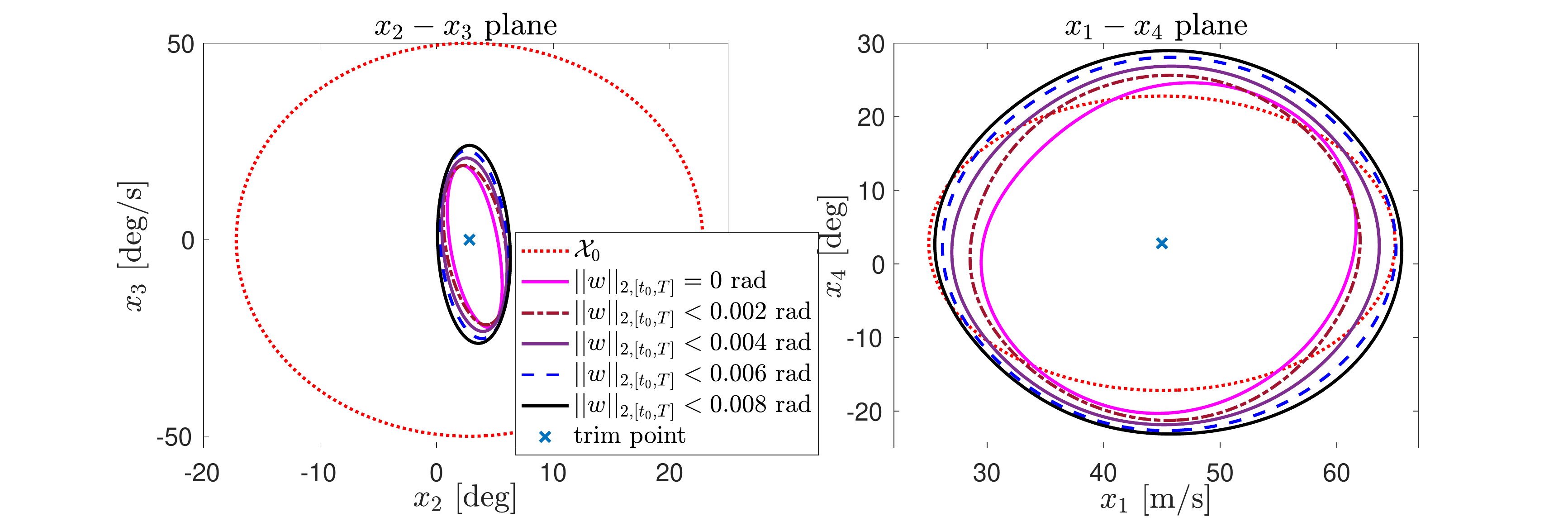}
	\caption{Outer bounds for GTM model at $T = 0.4$ sec with $\mathcal{L}_2$ disturbances. }
	\label{fig:disturbance}    
\end{figure}

\subsubsection{GTM with $\mathcal{L}_2$ disturbances and time varying uncertain parameters} \label{ex5}
In addition to an input disturbance $w$ at the elevator channel, satisfying $\norm{w}_{2,[t_0,T]} < 0.004$ rad, assume that the inertia $I_{yy}$ in (\ref{eq:q}) is also uncertain: $I_{yy} = \gamma(t) \bar{I}_{yy}$, where $\gamma(t)$ is a time varying uncertain parameter  and $\bar{I}_{yy}$ is the nominal value of inertia. Define $\delta := 1/\gamma$, assume $\gamma(t) \in [\frac{10}{11}, \frac{10}{9}]$, then $\delta(t) \in [0.9, 1.1] =: \mathcal{D}$. Equation (\ref{eq:q}) becomes
\begin{align}
\dot{x}_3 =&  \frac{M + T_m}{I_{yy}} = \frac{M + T_m}{\gamma \bar{I}_{yy}}  = \delta \frac{M + T_m}{\bar{I}_{yy}}. \nonumber 
\end{align}
The result is shown in Figure \ref{fig:parameter}, where the outer bounds with and without uncertain parameter are shown with blue and magenta curves, respectively.
%\begin{figure}[h]
%	\centering
%	\begin{subfigure}[b]{0.27\textwidth}
%		\centering
%		\includegraphics[width=1.2\textwidth]{figures/x2x3Uncertain.eps}
%		\caption{$x_2 - x_3$ plane}  
%		\label{fig:mean and std of net14}
%	\end{subfigure}
%	\hspace{1.2in}
%	\begin{subfigure}[b]{0.27\textwidth}  
%		\centering 
%		\includegraphics[width=1.1\textwidth]{figures/x1x4Uncertain.eps}
%		\caption{$x_1 - x_4$ plane}
%	\end{subfigure}
%	\caption{Over-approximations for GTM model at $T = 0.4$ sec with $\mathcal{L}_2$ disturbance and parameter. }
%	\label{fig:parameter}    
%\end{figure}
\begin{figure}[h]
	\centering
	\includegraphics[width=0.7\textwidth]{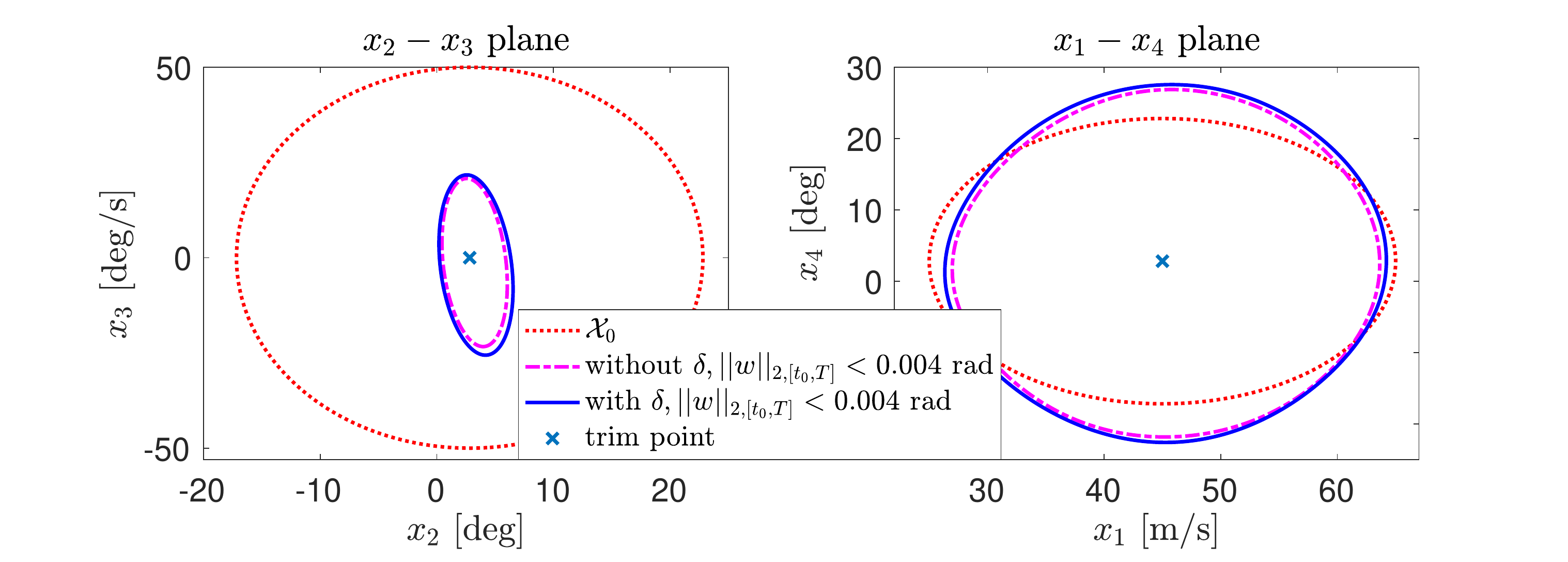}
	\caption{Over bounds for GTM model at $T = 0.4$ sec with $\mathcal{L}_2$ disturbance and parameter.}
	\label{fig:parameter}    
\end{figure}

\subsubsection{GTM with $\mathcal{L}_2$ disturbance and unmodeled dynamics $\Delta$} \label{ex6}
Assume the control input at elevator actuator of the GTM system is corrupted by an $\mathcal{L}_2$ disturbance $w$, satisfying $\norm{w}_{2,[t_0,T]} < 0.004$ rad, and an LTI uncertainty $\Delta$ with $\norm{\Delta}_{\infty} \leq \sigma$, where $\sigma > 0$. Figure \ref{fig:GTMFu} shows a block diagram for the uncertain GTM system. The input to the perturbation is $v = K_q x_3 + u_{elev,t}+ w$, and the signal that actually goes into the elevator channel is $u_{elev} = v + l$, where $l = \Delta(v)$. As discussed in Example~\ref{ex:IQC_example_LTI}, this LTI uncertainty $\Delta$ satisfies hard IQCs defined by $(\Psi, M_D)$ from Example \ref{ex:IQC_example_LTI}. In this example, we choose $d$ and $m$ from \eqref{eq:filter_choice} to be $d = 1$, $m = 1$, and they correspond to $\Psi$ of the following dynamics:
\begin{align}
&A_\psi = \left[\begin{smallmatrix}-1 & 0 \\ 0 & -1 \end{smallmatrix}\right], ~~ B_{\psi 1} = \left[\begin{smallmatrix}1 \\ 0 \end{smallmatrix}\right], ~~~~~~~~ B_{\psi 2} = \left[\begin{smallmatrix}0 \\ 1 \end{smallmatrix}\right], \nonumber \\
&C_\psi = \left[\begin{smallmatrix}0, 1, 0,0 \\ 0, 0, 0, 1\end{smallmatrix}\right]^\top, D_{\psi 1} = \left[\begin{smallmatrix}1, 0, 0,0 \end{smallmatrix}\right]^\top, D_{\psi 2} = \left[\begin{smallmatrix}0, 0, 1,0 \end{smallmatrix}\right]^\top. \nonumber 
\end{align}
Again, the filter $\Psi$ introduces two filter states $x_\psi \in \R^2$ to the extended system. We solve for the outer bounds with two values of $\sigma$ using $sosopt_2$ \revise{with the constraint set $\mathcal{M}_1$ defined in \eqref{eq:MD}}. The results are shown in Fig \ref{fig:perturbation}, where the outer bound with $\sigma = 0.1$ is shown with the magenta curve, and the one with $\sigma = 0.4$ is shown with the blue curve.
\begin{figure}[h]
	\centering
	\includegraphics[width=0.5\textwidth]{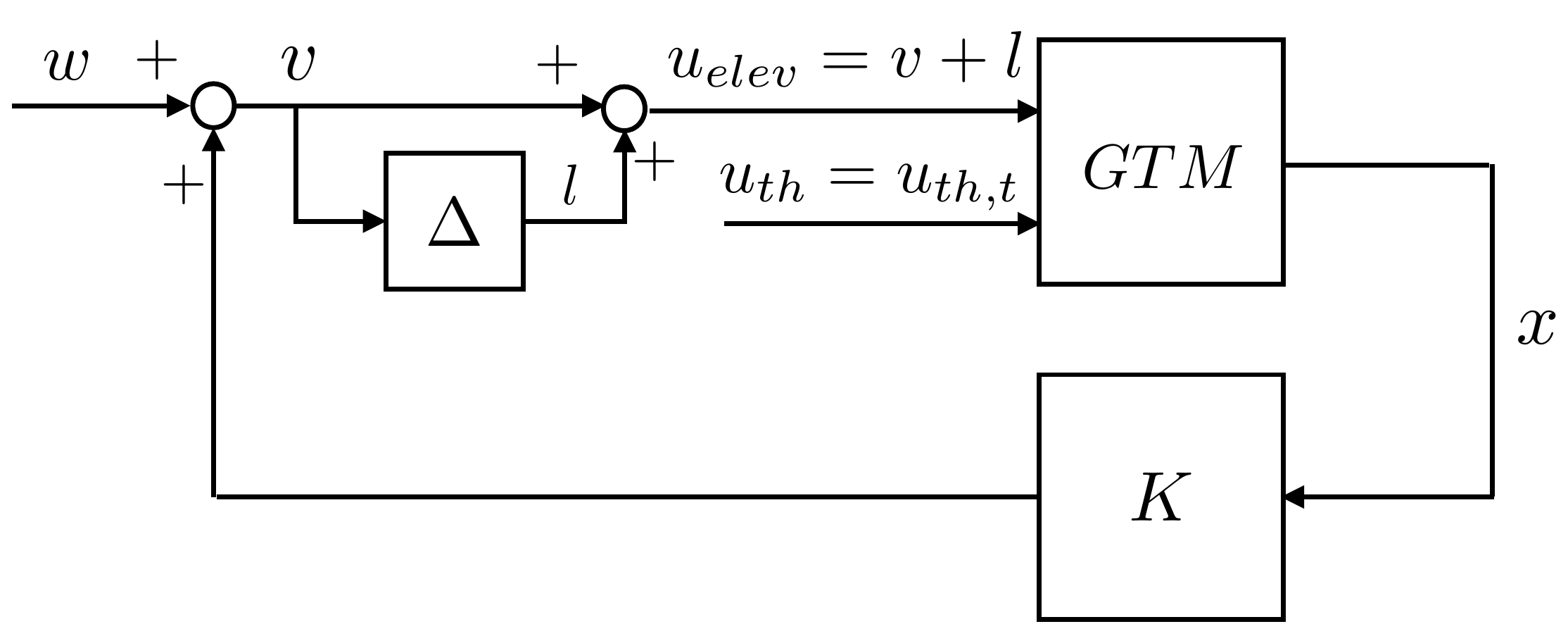}
	\caption{Uncertain nonlinear model for GTM.}
	\label{fig:GTMFu}    
\end{figure}
%\begin{figure}[h]
%	\centering
%	\begin{subfigure}[b]{0.27\textwidth}
%		\centering
%		\includegraphics[width=1.15\textwidth]{figures/x2x3Delta.eps}
%		\caption{$x_2 - x_3$ plane}  
%	\end{subfigure}
%	\hspace{1in}
%	\begin{subfigure}[b]{0.27\textwidth}  
%		\centering 
%		\includegraphics[width=1.15\textwidth]{figures/x1x4Delta.eps}
%		\caption{$x_1 - x_4$ plane}
%	\end{subfigure}
%	\caption{Over-approximations for GTM model at $T = 0.4$ sec with $\mathcal{L}_2$ disturbance and perturbation $\Delta$. }
%	\label{fig:perturbation}    
%\end{figure}
\begin{figure}[h]
	\centering
	\includegraphics[width=0.7\textwidth]{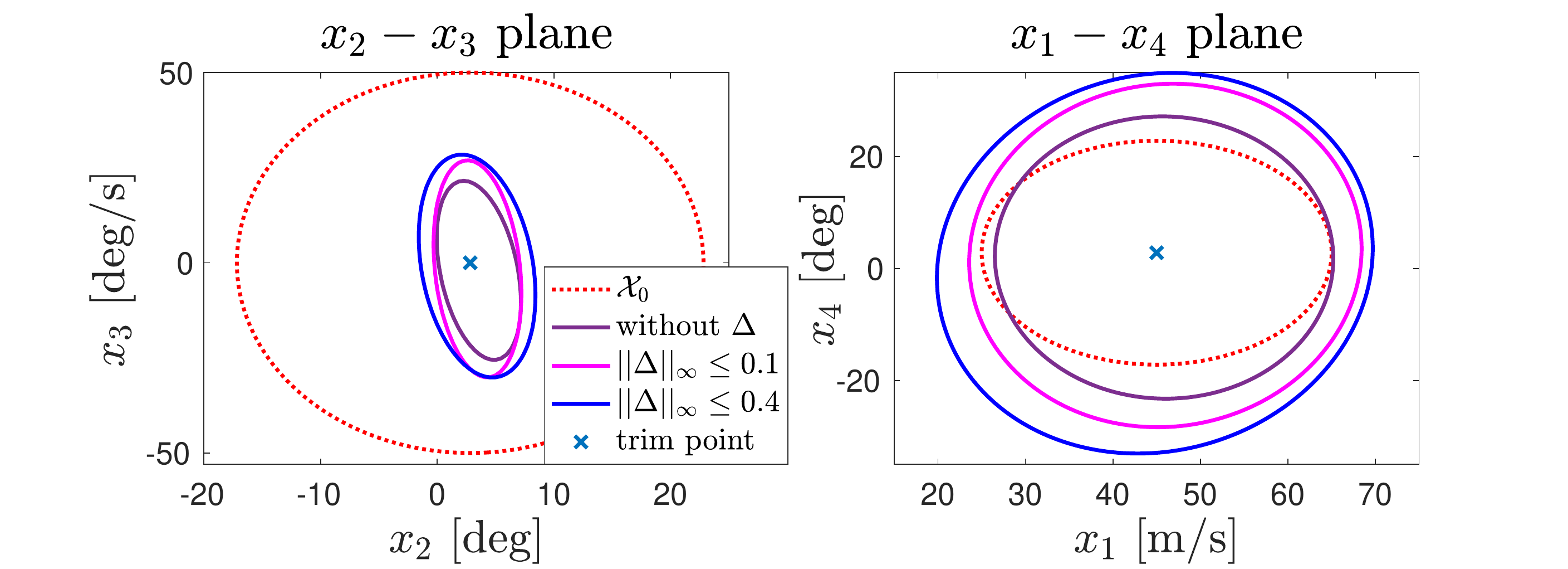}
	\caption{Outer bounds for GTM model at $T = 0.4$ sec with $\mathcal{L}_2$ disturbance and perturbation $\Delta$. }
	\label{fig:perturbation}    
\end{figure}

\subsection{F-18 around falling-leaf mode flight condition} \label{ex7}
In this example, we analyze a 7-state, cubic degree F-18 closed-loop polynomial model $\dot{x}= f_{3}(x)$ from \cite{Seiler:11}, where the states $x_1, ..., x_7$ represent sideslip angle (rad), angle-of-attack (rad), roll rate (rad/s), pitch rate (rad/s), yaw rate (rad/s), bank angle (rad), controller state (rad) respectively. The trim point of the states is $x_t = $ [0 \text{degree}, 20.17 \text{degree}, -1.083 \text{degree/s}, 1.855 \text{degree/s}, 2.634 \text{degree/s}, 35 \text{degree}, 0 \text{degree}]. Consider the flight condition for a coordinated turn ($x_{1,t} = 0^{\circ}$) at a $35^{\circ}$ bank angle and at the air speed of 350 ft/s. Around this condition the aircraft is more likely to experience the falling-leaf motion.  The analysis is performed around this flight condition.

The given initial set $\mathcal{X}_0 = \{x \in \mathbb{R}^7 | (x - x_t)^\top C^{-1} (x - x_t) -1\le 0\}$ is a 7-dimensional ellipsoid inside the region of attraction, where $C = diag((10\pi/180)^2, $$(25\pi/180)^2,$ $(35\pi/180)^2,$ $(30\pi/180)^2,$ $(15\pi/180)^2,$ $(25\pi/180)^2,$ $(20$$\pi/$$180)^2)$. Again, in order to improve the numerical conditioning, we scale the states $x_{scl} = N^{-1}x$, where $N = diag(10\pi/180,$ $25\pi/180,$ $35\pi/180,$ $ 5\pi/180,$ $15\pi/180,$ $25\pi/180,$ $20$$\pi/180)$. Take $\mathcal{X}_l$ with radii twice as long as those of $\mathcal{X}_0$. Take $\Omega_1^q$ as the minimum volume ellipsoid containing all the simulation points $x(T)$.

Figure \ref{fig:F18x1x2} and Figure \ref{fig:F18x3x5} show the outer bound of reachable set in $x_1 - x_2$ space and $x_3- x_5$ space respectively, at different simulation times. The red dotted curve is a slice of initial set $\mathcal{X}_0$. We can see that $\Omega_{T,0}^V$ (shown with the black curve) tightly contains $x(T)$ (shown with green points). We verified the reliability of the solutions of this example from SOS programming with a post-processing step as advocated in \cite{Lofberg:09}. 
%\begin{figure}[h]
%	\centering
%	\begin{subfigure}[b]{0.3\columnwidth}
%		\centering
%		\includegraphics[width=\columnwidth]{figures/x1x2point1.eps}
%		\caption{T = 0.1 sec}  
%		% \label{fig:mean and std of net14}
%	\end{subfigure}
%\hspace{1in}
%	\begin{subfigure}[b]{0.3\columnwidth}  
%		\centering 
%		\includegraphics[width=\columnwidth]{figures/x1x2point2.eps}
%		\caption{T = 0.2 sec}
%		%\label{fig:mean and std of net24}
%	\end{subfigure}
%	\begin{subfigure}[b]{0.3\columnwidth}   
%		\centering 
%		\includegraphics[width=\columnwidth]{figures/x1x2point3.eps}
%		\caption{T = 0.3 sec} 
%		%\label{fig:mean and std of net34}
%	\end{subfigure}
%\hspace{1in}
%	\begin{subfigure}[b]{0.3\columnwidth}   
%		\centering 
%		\includegraphics[width=\columnwidth]{figures/x1x2point4.eps}
%		\caption{T = 0.4 sec}
%		%\label{fig:mean and std of net44}
%	\end{subfigure}
%	\caption{Over-approximations for F-18 model in $x_1 - x_2$ plane. }
%	\label{fig:F18x1x2}
%\end{figure}
\begin{figure}[H]
	\centering
	\includegraphics[width=0.7\textwidth]{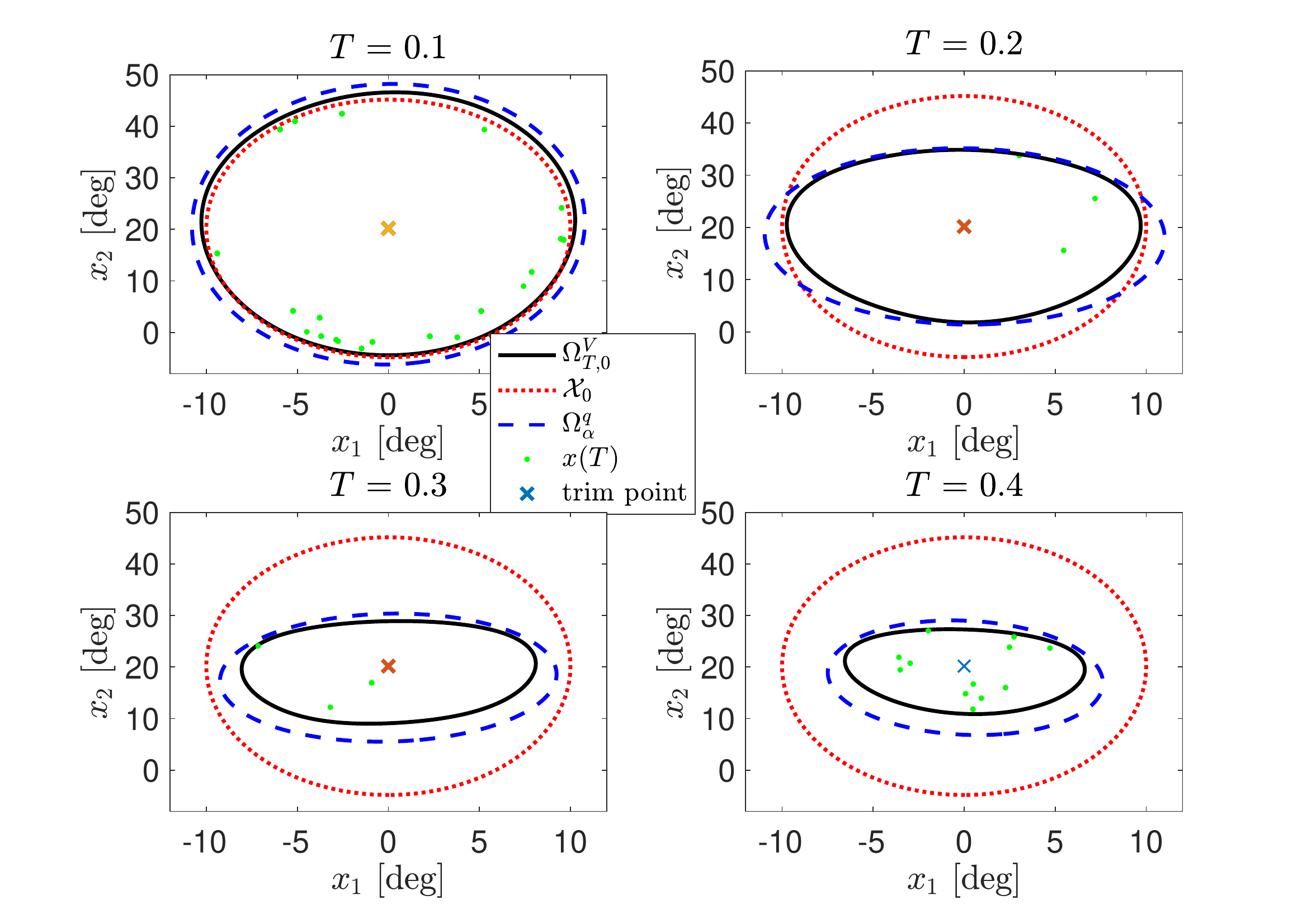}
	\caption{Outer bounds for F-18 model in $x_1 - x_2$ plane.}
	\label{fig:F18x1x2}    
\end{figure}
%\begin{figure}[h]
%	\centering
%	\begin{subfigure}[b]{0.27\columnwidth}
%		\centering
%		\includegraphics[width=1.13\columnwidth]{figures/x3x5point1.eps}
%		\caption{T = 0.1 sec}  
%		\label{fig:mean and std of net14}
%	\end{subfigure}
%	\hspace{1.2in}
%	\begin{subfigure}[b]{0.27\columnwidth}  
%		\centering 
%		\includegraphics[width=1.13\columnwidth]{figures/x3x5point2.eps}
%		\caption{T = 0.2 sec}
%		\label{fig:mean and std of net24}
%	\end{subfigure}
%	% \vskip\baselineskip
%	\begin{subfigure}[b]{0.27\columnwidth}   
%		\centering 
%		\includegraphics[width=1.13\columnwidth]{figures/x3x5point3.eps}
%		\caption{T = 0.3 sec} 
%		\label{fig:mean and std of net34}
%	\end{subfigure}
%	\hspace{1.2in}
%	\begin{subfigure}[b]{0.27\columnwidth}   
%		\centering 
%		\includegraphics[width=1.13\columnwidth]{figures/x3x5point4.eps}
%		\caption{T = 0.4 sec}
%		\label{fig:mean and std of net44}
%	\end{subfigure}
%	\caption{Over-approximations for F-18 model in $x_3 - x_5$ plane. }
%	\label{fig:F18x3x5}
%\end{figure}
\begin{figure}[H]
	\centering
	\includegraphics[width=0.7\textwidth]{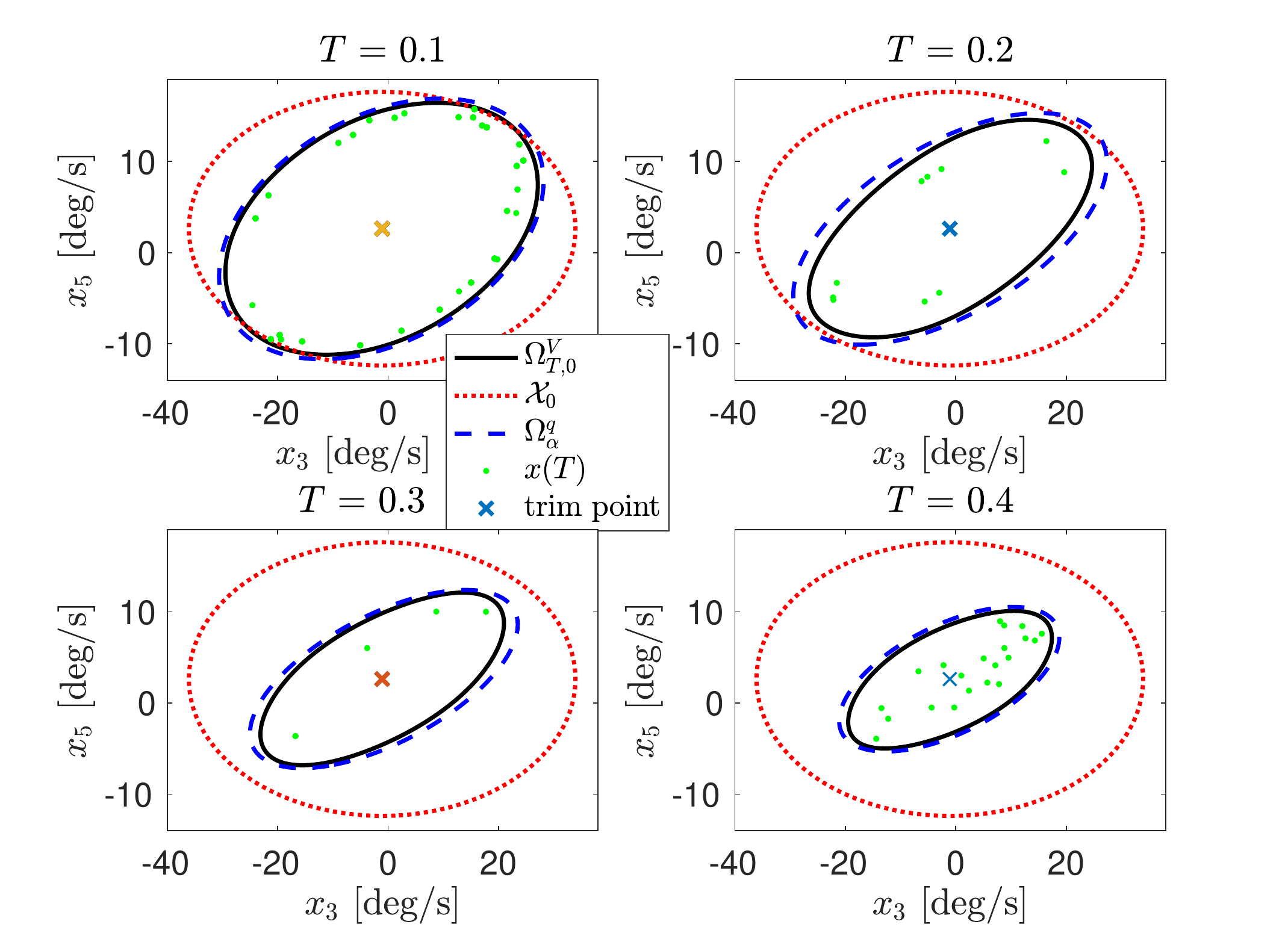}
	\caption{Outer bounds for F-18 model in $x_3 - x_5$ plane.}
	\label{fig:F18x3x5}    
\end{figure}

\subsubsection{Comparison to the $V, s$ iterations method} \label{sec:compare_Vs}
	The outer bound of the reachable set at $T = 0.4$ sec is also computed using the $V, s$ iterations method from \cite{Majumdar:17} with the same shape function $q$ as the one we used before. The outer-approximations obtained using the $V,s$ iterations and quasi-convex methods are shown with the brown curves and black curves in Figure \ref{fig:compare}, respectively. We can see that the brown curves enclose the black curves in both plots, and thus the outer bound from the quasi-convex method is less conservative.
%\begin{figure}[h]
%	\centering
%	\begin{subfigure}[b]{0.3\textwidth}
%		\centering
%		\includegraphics[width=1.1\textwidth]{figures/x1x2compare.eps}
%		\caption{$x_1 - x_2$ plane}  
%	\end{subfigure}
%	\hspace{1in}
%	\begin{subfigure}[b]{0.3\textwidth}  
%		\centering 
%		\includegraphics[width=1.1\textwidth]{figures/x3x5compare.eps}
%		\caption{$x_3 - x_5$ plane}
%	\end{subfigure}
%	\caption{Comparison of over-approximations at $T = 0.4$ sec for F-18 model. }
%	\label{fig:compare}    
%\end{figure}
\begin{figure}[h]
	\centering
	\includegraphics[width=0.7\textwidth]{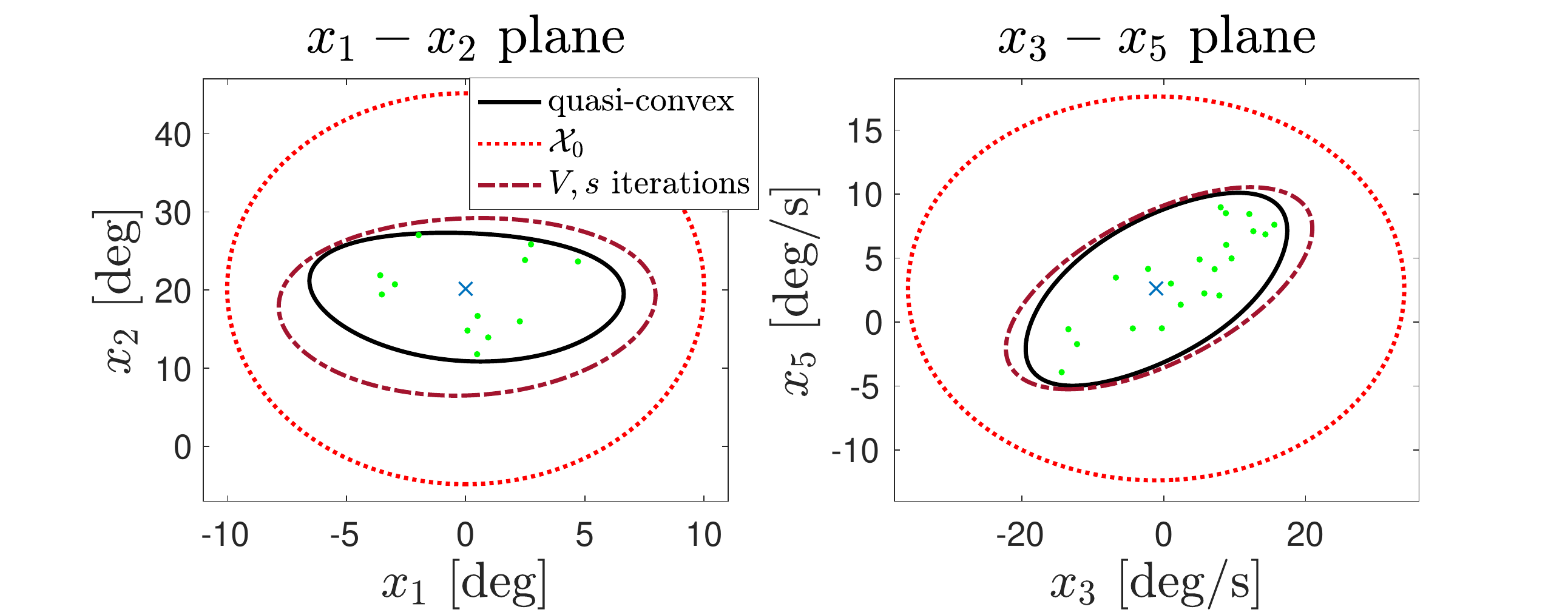}
	\caption{Comparison of outer bounds at $T = 0.4$ sec for F-18 model. }
	\label{fig:compare}    
\end{figure}

The computation details are shown in Table \ref{tab:table}, including the obtained $\alpha$ and computation time. We can see that compared with the $V, s$ iterations, within the similar amount of computation time, using the same shape function, the quasi-convex method from this paper is able to achieve smaller $\alpha$, i.e. less conservative outer bound. Also, from the value of $\alpha^\star$ reported in Table \ref{tab:table}, we can see that the outer bound obtained using our method is contained by $\Omega_{1.36}^q$, whose radii are 1.166 times those of  $\Omega_1^q$, the minimum-volume ellipsoid that contains all the simulation points. This indicates the tightness of the outer bound. 
\begin{table}[h]
	\centering
	\caption{Computation results and details for the two methods \label{tab:table}}
	\begin{tabular}{ |M{2.5cm}|M{1cm}|M{2cm}|M{2cm}|M{2cm}|}
		\hline
		Methods &$\alpha^*$ &Degree of $V$ &Degree of $s$ &Time[sec] \\
		\hline
		$V, s$   iterations &1.70 & 4 &4  & $5.2 \times 10^3$\\ \hline 
		quasi-convex     &1.36 & 6 &6  &$3.7 \times 10^3$\\ \hline
	\end{tabular}
\end{table}

\section{Conclusions} \label{sec:conclu}
%
%We proposed a method for computing over-approximations of reachable sets using time varying storage functions that satisfy ``local'' dissipation inequalities, which simultaneously account for $\mathcal{L}_2$ disturbances, parametric uncertainties and perturbations described by time-domain IQCs. \SLC{The use of IQCs allows for various types of uncertainty, including unmodeled dynamics.} SOS programming and the generalized S-Procedure are used in deriving the optimization for computing the tightest over-approximation. \SLC{The optimization is quasi-convex and thus has guarantee on achieving the global optima and does not require an initialization to the storage function.} We applied this method to several examples including aircraft models. 
We proposed a method for computing outer bounds of reachable sets using time varying storage functions that satisfy ``local" dissipation
inequalities. The method is developed for nonlinear systems with polynomial vector fields and simultaneously accounts for $\mathcal{L}_2$ disturbances,
parametric uncertainties, and perturbations described by time-domain integral
quadratic constraints (IQCs). A key aspect is that IQCs can be used to account for unmodeled dynamics. The computational algorithms rely on SOS programming
and the generalized S-procedure. This leads to quasi-convex optimizations for computing the tightest outer bound of the reachable set. It is thus possible to compute the global optima for this optimization and no initialization is required for the storage function. We applied the proposed method to several examples including several using nonlinear aircraft models.
\section*{Acknowledgements}  
This work was funded in part by the grants ONR grant N00014-18-1-2209, AFOSR FA9550-18-1-0253, and NSF ECCS-1906164. A. Packard acknowledges generous support from the FANUC Corporation through the FANUC Chair in Mechanical Systems.

\bibliography{reference.bib}
\bibliographystyle{IEEEtran}  

\end{document}